\newtheorem{theorem}{Theorem}
\newtheorem{lemma}[theorem]{Lemma}
\newtheorem{proposition}[theorem]{Proposition}
\newtheorem{remark}[theorem]{Remark}
\newtheorem{defi/prop}[theorem]{Definition/Proposition}
\newcommand{\N}{\mathbb{N}}
\newcommand{\R}{\mathbb{R}}
\newcommand{\C}{\mathbb{C}}
\newcommand{\EE}{\mathbb{E}}
\renewcommand{\P}{\mathbb{P}}
\renewcommand{\leq}{\leqslant}
\renewcommand{\geq}{\geqslant}
\newcommand{\st}{\  : \ }
\newcommand{\Id}{\mathrm{Id}}
\newcommand{\cD}{\mathcal{D}}
\newcommand{\cS}{\mathcal{S}}
\DeclareMathOperator{\tr}{Tr}
\newcommand{\ketbra}[2]{| #1 \rangle \langle #2 |}
\newcommand{\bra}[1]{\langle #1 |}
\newcommand{\ket}[1]{| #1 \rangle}
\begin{document}

\title{Should Entanglement Measures be Monogamous or Faithful?}

\author{C\'{e}cilia Lancien}
\affiliation{F\'isica Te\`orica: Informaci\'o i Fen\`omens Qu\`antics, Universitat Aut\`onoma de Barcelona, ES-08193 Bellaterra (Barcelona), Spain}
\affiliation{$\mbox{Institut Camille Jordan, Universit\'e Claude Bernard Lyon 1,
43 Boulevard du 11 Novembre 1918, 69622 Villeurbanne Cedex, France}$}

\author{Sara Di Martino}
\affiliation{F\'isica Te\`orica: Informaci\'o i Fen\`omens Qu\`antics, Universitat Aut\`onoma de Barcelona, ES-08193 Bellaterra (Barcelona), Spain}

\author{Marcus Huber}

\affiliation{Group of Applied Physics, University of Geneva, 1211 Geneva 4, Switzerland}
\affiliation{F\'isica Te\`orica: Informaci\'o i Fen\`omens Qu\`antics, Universitat Aut\`onoma de Barcelona, ES-08193 Bellaterra (Barcelona), Spain}
\affiliation{Institute  for  Quantum  Optics  and  Quantum Information (IQOQI), Austrian Academy of Sciences,  Boltzmanngasse 3,  A-1090 Vienna, Austria}

\author{Marco Piani}
\affiliation{SUPA and Department of Physics, University of Strathclyde, Glasgow G4 0NG, United Kingdom}

\author{Gerardo Adesso}
\affiliation{$\mbox{School of Mathematical Sciences, The University of Nottingham, University Park, Nottingham NG7 2RD, United Kingdom}$}

\author{Andreas Winter}
\affiliation{F\'isica Te\`orica: Informaci\'o i Fen\`omens Qu\`antics, Universitat Aut\`onoma de Barcelona, ES-08193 Bellaterra (Barcelona), Spain}
\affiliation{ICREA -- Instituci\'o Catalana de Recerca i Estudis Avan\c{c}ats, Pg. Lluis Companys 23, ES-08010 Barcelona, Spain}

\begin{abstract}
``Is entanglement monogamous?'' asks the title of a popular article [B.~Terhal, IBM~J.~Res.~Dev.~48,~71~(2004)], celebrating C.~H.~Bennett's legacy on quantum information theory. While the answer is affirmative in the qualitative sense, the situation is less clear if monogamy is intended as a quantitative limitation on the distribution of  bipartite entanglement in a multipartite system, given some particular measure of entanglement.
Here, we formalize what it takes for a bipartite measure of entanglement to obey a general quantitative monogamy relation on all quantum states. We then prove that an important class of entanglement measures fail to be monogamous in this general sense of the term, with monogamy violations becoming generic with increasing dimension. In particular, we show that every additive and suitably normalized entanglement measure cannot satisfy any nontrivial general monogamy relation while at the same time faithfully capturing the geometric entanglement structure of the fully antisymmetric state in arbitrary dimension. Nevertheless, monogamy of such entanglement measures can be recovered if one allows for dimension-dependent relations, as we show explicitly with relevant examples.
\end{abstract}

\date{1 August 2016}
\maketitle

{\bf Introduction.} Entanglement is a quintessential manifestation of quantum mechanics \cite{schr,schr2}. The study of entanglement and its distribution reveals fundamental insights into the nature of quantum correlations \cite{horodecki_2009}, on the properties of many-body systems~\cite{reviewarealaw,amico_2008}, and on possibilities and limitations for quantum-enhanced technologies \cite{dowling_2003}. A particularly interesting feature of entanglement is known as {\it monogamy} \cite{terhal_2004}, that is, the impossibility of sharing entanglement unconditionally across many subsystems of a composite quantum system.

In the clearest manifestation of monogamy,
if two parties $A$ and $B$ with the same (finite) Hilbert space dimension are maximally entangled, then their state is a pure state $\ket{\Phi}_{AB}$ \cite{aremax}, and neither of them can share any correlation---let alone entanglement---with a third party $C$, as the only physically allowed pure states of the tripartite system $ABC$ are product states $\ket{\Phi}_{AB} \otimes \ket{\Psi}_C$.
%
Consider now the more realistic case of $A$ and $B$ being in a mixed, partially entangled state $\rho_{AB}$. It is then conceivable for more parties to get a share of such entanglement. Namely, a state $\rho_{AB}$ on a Hilbert space ${\cal H}_A \otimes {\cal H}_B$ is termed ``$n$-shareable''  with respect to subsystem $B$ if it admits a symmetric $n$-extension, i.e.~a state $\rho'_{AB_1\ldots B_n}$ on ${\cal H}_A \otimes {\cal H}_{B}^{\otimes n}$  
invariant under permutations of the subsystems $B_1,\ldots,B_n$ and such that the marginal state of $A$ and any $B_j$ amounts to $\rho_{AB}$.
While even an entangled state can be shareable up to some number of extensions, a seminal result is that a state $\rho_{AB}$ is $n$-shareable {\em for  all}  $n \geq 2$ if and only if it is separable, that is, no entangled state can be infinitely-shareable \cite{fannes_1988,werner_1989,doherty_2004,terhal_2004,yang_2006}. This statement formalizes exactly the monogamy of entanglement (in an asymptotic setting), and has many important implications, including the equivalence between asymptotic quantum cloning and state estimation  \cite{bae_2006, chiribellabecomeclassical}, the emergence of objectivity in the quantum-to-classical transition~\cite{brandao2015generic}, the security of quantum key distribution \cite{ekert_1991,devetak_2005,pawlovski_2010,vazirani_2014}, and the study of frustration and topological phases in many-body systems~\cite{osborne_2006,ferraro_2007,giampaolo_2011,giampaolo_2015,jens_2016}.

Over the last two decades, the goal to formalize monogamy of entanglement in precise quantitative terms and for a finite number of parties attracted increasing interest. The concept of monogamy became synonymous with the validity of an inequality due to Coffman, Kundu and Wootters (CKW) \cite{coffman_2000}. Given any tripartite state $\rho_{ABC}$, and choosing a bipartite entanglement measure $E$, the CKW inequality reads \cite{footnotenote}
\begin{equation}\label{eq:ckw}
E_{A:BC}(\rho_{ABC}) \geq E_{A:B}(\rho_{AB}) + E_{A:C}(\rho_{AC})\,,
\end{equation}
with $\rho_{AB}=\tr_C[\rho_{ABC}]$ and $\rho_{AC}=\tr_B[\rho_{ABC}]$.
Intuitively, Eq.~(\ref{eq:ckw}) means that the sum of the individual pairwise entanglements between $A$ and each of the other parties $B$ or $C$ cannot exceed the entanglement between $A$ and the remaining parties grouped together. Eq.~(\ref{eq:ckw}) was originally proven for arbitrary states of three qubits, adopting the squared concurrence as entanglement measure \cite{coffman_2000}. Variations of the CKW inequality and generalizations to $n$ parties have been established for a number of entanglement measures in discrete as well as continuous variable systems \cite{Dennison,koashi_2004,osborne_2006,adesso_2006,adesso_2007,adesso_2012,kim_2009,kim_2010,fanchini_2013,fanchini_2014,bai_2014,luo_2015,cornelio_2013,regula_2014}. In particular, the squashed entanglement \cite{christandl_2004} and the one-way distillable entanglement fulfill Eq.~(\ref{eq:ckw}) in composite systems of arbitrary dimension \cite{koashi_2004}. Hybrid CKW-like inequalities involving entanglement and other forms of correlations have also been proven \cite{koashi_2004,horodecki2007quantum}, while measures of quantum correlations weaker than entanglement  generally violate the CKW inequality \cite{streltsov_2012}. To some extent, therefore, Eq.~(\ref{eq:ckw}) does capture the spirit of monogamy as a distinctive property of entanglement.

\begin{figure*}[t]
\centering
\begin{minipage}[b]{.28\textwidth}
\vspace*{-1cm}
\includegraphics[width=4.5cm]{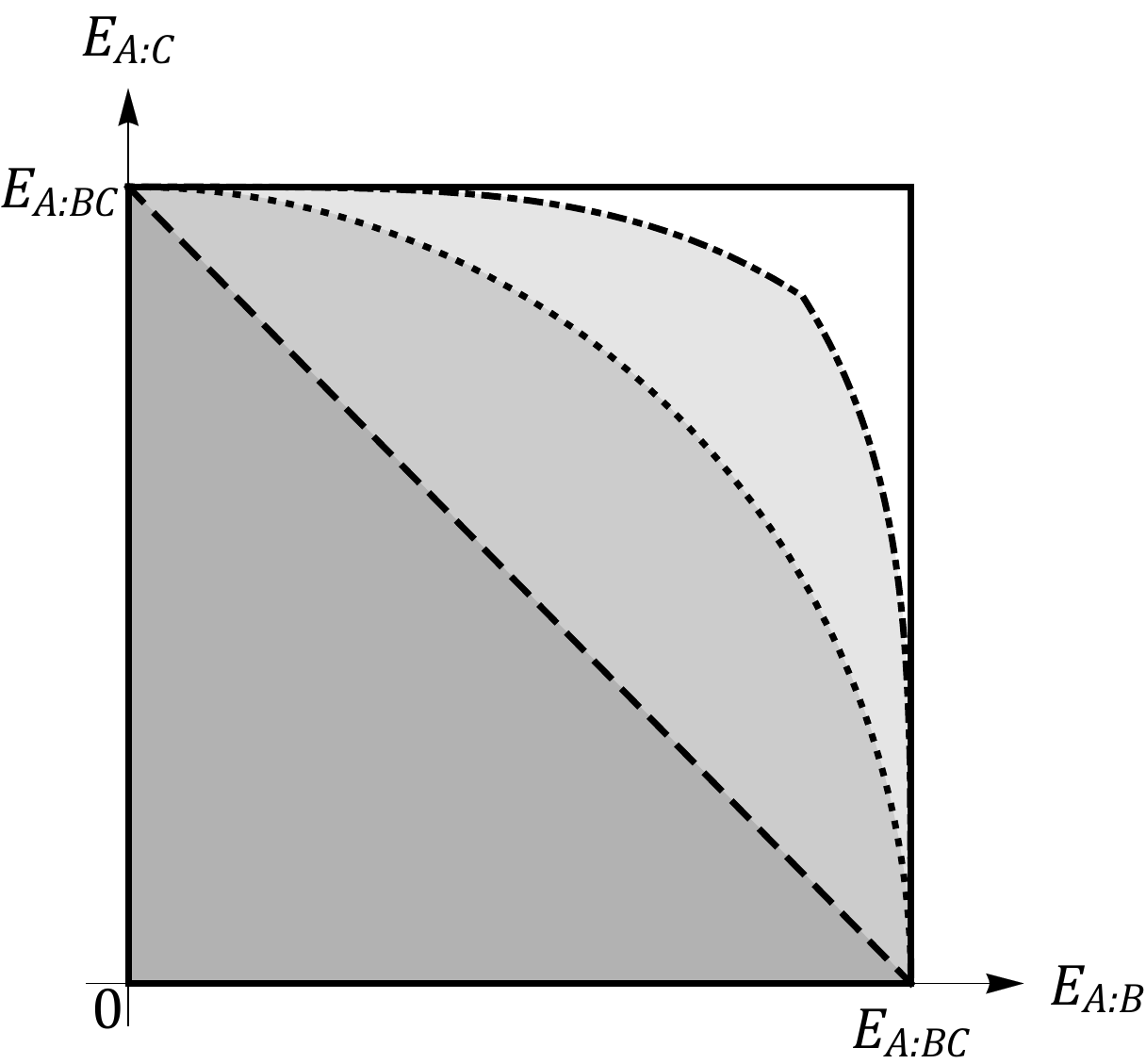}
\end{minipage}
\begin{minipage}[b]{.7\textwidth}
\caption{For any tripartite state $\rho_{ABC}$, an entanglement measure $E$ obeys monogamy if, given the global entanglement $E_{A:BC}$, the pairwise terms $E_{A:B}$ and $E_{A:C}$ are non-trivially constrained. We formalize these constraints via a function $f(E_{A:B}, E_{A:C})$ in Eq.~(\ref{eq:monogamy-relation}). 
Choosing $f(x,y)=x+y$, one gets the CKW inequality (\ref{eq:ckw}), which limits the pairwise terms to the triangular darker region with dashed boundary. The other depicted shaded regions correspond to $f(x,y)=\sqrt{x^2+y^2}$ (dotted boundary) and $f(x,y) = \max(x+c y^4,y+c x^4)$ with $c$ a constant (dot-dashed boundary). Any measure $E$ is termed {\it monogamous} if, for all tripartite states, the ensuing entanglement distribution can be confined to a region strictly smaller than the white square with solid boundary.
The latter denotes the trivial choice $f(x,y)=\max(x,y)$, which is satisfied a priori by any entanglement monotone $E$.}
\label{fig:func}
\end{minipage}
\end{figure*}

The main problem with CKW inequalities, however, is that their validity is not universal, but rather depends on the specific choice of  $E$. Perhaps counterintuitively, several prominent entanglement monotones, such as the entanglement of formation or the distillable entanglement \cite{bennett_1996,virmani_2007,horodecki_2009}, do not obey the constraint formalized by Eq.~(\ref{eq:ckw}), unless one introduces {\it ad hoc} rescalings (see e.g.~\cite{prabhu_2014}).  Since entanglement as a concept {\it is} monogamous in the $n$-shareability sense \cite{terhal_2004}, one is led to raise the following key question: Should any valid {\it entanglement measure} be monogamous in a CKW-like sense?

In this Letter we address the question in general terms. Given an entanglement measure $E$, we shall say that it is {\it monogamous} if there exists a non-trivial function $f:\R_{\geq 0}\times\R_{\geq 0}\rightarrow\R_{\geq 0}$ such that the generalized monogamy relation
\begin{equation} \label{eq:monogamy-relation}
E_{A:BC}(\rho_{ABC}) \geq f\big(E_{A:B}(\rho_{AB}), E_{A:C}(\rho_{AC})\big)\,,
\end{equation}
is satisfied for {\it any} state $\rho_{ABC}$ on {\it any} tripartite Hilbert space ${\cal H}_A \otimes {\cal H}_B \otimes {\cal H}_C$.
Recalling that $E$ is an entanglement monotone (which in turn implies that it is nonincreasing under partial traces), the function $f$ in Eq.~\eqref{eq:monogamy-relation} is without loss of generality such that $f(x,y)\geq \max(x,y)$. Thus, to give rise to a non-trivial constraint, we need $f(x,y) > \max(x,y)$ for at least some range of values of $x$ and $y$.  One might further impose that $f(x,y)$ is monotonic in both its arguments, but we will not require this here. We will however require that $f$ is continuous in general.

While the CKW form \eqref{eq:ckw} of a monogamy relation (which is recovered for the particular choice $f(x,y)=x+y$) implicitly presumes some kind of additivity of the entanglement measure in question, our general form \eqref{eq:monogamy-relation} transcends this and can be applied to recognize any entanglement measure $E$ as {\it de facto} monogamous, based on the intuition that it should obey \emph{some} trade-off between the values of $E_{A:B}$ and $E_{A:C}$ for a given $E_{A:BC}$, see Fig.~\ref{fig:func}. Oppositely, if the only possible choice in Eq.~\eqref{eq:monogamy-relation} were $f(x,y)=\max(x,y)$, then the measure $E$ would fail monogamy in the most drastic fashion: given a state $\rho_{ABC}$, having $E_{A:BC}>0$ a priori would not imply that $E_{A:B}$ and $E_{A:C}$ have to  constrain each other in the interval $[0,E_{A:BC}]$.

Quite remarkably, we rigorously show in the following that the entanglement of formation $E_F$ \cite{bennett_1996} and the relative entropy of entanglement $E_R$ \cite{vedral_1997}, which are two of the most important entanglement monotones for mixed states \cite{virmani_2007,horodecki_2009}, {\it cannot} satisfy a non-trivial monogamy relation in the sense of Eq.~(\ref{eq:monogamy-relation}), with violations becoming generic \cite{hayden_2006} with increasing Hilbert space dimension.
We further show that a whole class of additive entanglement measures, including the entanglement cost $E_F^\infty$ \cite{bennett_1996,vidal_2002} and the regularized relative entropy of entanglement $E_R^\infty$ \cite{brandao_2008,piani_2009},  also fail monogamy as captured by Eq.~(\ref{eq:monogamy-relation}). The latter result is proven by a constructive argument which exploits the peculiar properties of the maximally antisymmetric state on $\C^n \otimes \C^n$, which is $(n-1)$-shareable yet far from separable \cite{christandl_2012}, and has hence been dubbed the `universal counterexample' in quantum information theory \cite{aaronson_2009}. Specifically, any additive entanglement measure which is geometrically {\it faithful} in the sense of being lower-bounded by a quantity with a sub-polynomial dimensional dependence on the antisymmetric state, cannot be monogamous in general.

Our analysis then reveals that entanglement measures divide into two main categories: monogamous (yet geometrically unfaithful) ones, like the squashed entanglement \cite{christandl_2004,koashi_2004}, and geometrically faithful (yet non-monogamous) ones, like $E_F$, $E_R$, and their regularizations. Finally, we show that this dilemma can be resolved if one relaxes the definition~(\ref{eq:monogamy-relation}) to introduce monogamy relations for any {\it fixed} dimension of ${\cal H}_A \otimes {\cal H}_B \otimes {\cal H}_C$. Explicitly, we prove that  $E_F$ and $E_R^\infty$ are retrievable as monogamous for any finite dimension, by providing dimension-dependent choices of $f$ in Eq.~(\ref{eq:monogamy-relation}), which only reduce to the trivial one in the limit of infinite dimension.

{\bf Result (1) Generic non-monogamy for entanglement of formation and relative entropy of entanglement.}
We begin by defining the measures employed in our analysis \cite{bennett_1996,vedral_1997,virmani_2007,horodecki_2009}. The entanglement of formation $E_F$ is the convex roof extension of the entropy of entanglement, $E_F(\rho_{A:B}) = \underset{{\{p_i, \ket{\psi_i}_{AB}\}}}{\inf} \sum_i p_i S(\tr_B [\ket{\psi_i}\!\bra{\psi_i}_{AB}])$, where $S(\rho) = - \tr [\rho \log \rho]$ is the von Neumann entropy, and $\log \equiv \log_2$. On the other hand, the relative entropy of entanglement $E_R$ quantifies the distance from the set of separable states, $E_R(\rho_{A:B}) = \underset{\sigma_{AB}}{\inf} \ S(\rho_{AB} || \sigma_{AB})$, where the minimization is over all separable $\sigma_{AB}$, and $S(\rho||\sigma) = \tr[\rho \log \rho - \rho \log \sigma]$ is the relative entropy.

The fact that both $E_F$ and $E_R$  violate the CKW inequality~(\ref{eq:ckw}) may be traced to their subadditivity, meaning that there exist states $\rho_{AB}$ and $\sigma_{A'C}$ such that $E(\rho_{A:B} \otimes  \sigma_{A':C}) < E(\rho_{A:B}) + E(\sigma_{A':C})$, with $E$ denoting either $E_F$ \cite{hastingsadditivity} or $E_R$ \cite{vollbrecht_2001}. We now show that these measures fail monogamy even in the general sense of Eq.~(\ref{eq:monogamy-relation}). These results are based on {\it random induced states}, defined as follows. Given $n,s\in\N$, a random mixed state $\rho$ on $\C^n$ is induced by $\C^s$ if $\rho=\tr_{\C^s}\ket{\psi}\bra{\psi}$ for $\ket{\psi}$ a uniformly distributed random pure state on  $\C^n\otimes\C^s$. Note that if  $s\leq n$, this is equivalent to $\rho$ being uniformly distributed on the set of mixed states of rank at most $s$ on $\C^n$. Here we focus on the (balanced) bipartite Hilbert space $\C^d\otimes\C^d$, and aim to determine, given $d,s\in\N$, what is the typical value of $E_F(\rho_{A:B})$ and $E_R(\rho_{A:B})$ for $\rho_{AB}$ a random state on $\C^d\otimes\C^d$ induced by an environment $\C^s$. The answer is as follows,
\begin{align}
&\mbox{$\P\left( \left| E_F(\rho_{A:B}) - \left[\log d - \frac{1}{2\ln 2}\right] \right| \leq t \right) \geq 1- e^{-cd^2t^2/\log^2 d}$}, \label{eq:efrandom} \\
&\mbox{$\P\left( \left| E_R(\rho_{A:B}) - \left[\log\left(\frac{d^2}{s}\right)  + \frac{1}{2\ln 2}\frac{s}{d^2}\right] \right| \leq t \right) \geq 1- e^{-cst^2}$}, \label{eq:errandom}
\end{align}
where $s\leq Cd^2t^2/\log^2 (1/t)$ for any fixed $t>0$ in Eq.~(\ref{eq:efrandom}), and $Cd\log(1/t)/t^2\leq s\leq d^2$ for any fixed $0<t<1$ in Eq.~(\ref{eq:errandom}), while $c,C>0$ denote universal constants in both equations. While Eq.~(\ref{eq:efrandom}) was established in
\cite{hayden_2006} (Theorem V.1), although with a looser dependence on the parameters, Eq.~(\ref{eq:errandom}) is an entirely original result of independent interest. Both proofs are mathematically quite involved, and are relegated to \cite{epaps}. \nocite{ZySo,Levy,Milman,ASW,Pisier}

Importantly, failure of monogamy is then retrieved as a {\it generic} trait of entanglement quantified by these measures. Namely, the main result of this section is that there exist states $\rho_{ABC}^{(x)}$ on Hilbert spaces ${\cal H}_A^{(x)} \otimes {\cal H}_B^{(x)}  \otimes {\cal H}_C^{(x)}$ such that, as $x\rightarrow\infty$,
\begin{align}
\label{eq:nomonoEFER}
E_{A:BC}\big(\rho_{ABC}^{(x)}\big)\leq x,\ \text{while}\ E_{A:B}\big(\rho_{AB}^{(x)}\big) \sim E_{A:C}\big(\rho_{AC}^{(x)}\big) \sim x,
\end{align}
for $E$ denoting either $E_F$ or $E_R$. To sketch the proof \cite{epaps}, set $d=\lfloor 2^x\rfloor$ and
${\cal H}_A^{(x)} \equiv {\cal H}_B^{(x)}  \equiv {\cal H}_C^{(x)} \equiv \C^d$. Next, consider $\rho_{ABC}^{(x)}$ a random state on ${\cal H}_A^{(x)} \otimes {\cal H}_B^{(x)}  \otimes {\cal H}_C^{(x)}$, induced by some  ${\cal H}_E^{(x)} \equiv\C^s$, with  $s\sim\log d$. In that way, $\rho_{AB}^{(x)}$ and $\rho_{AC}^{(x)}$ are random states on $\C^d\otimes\C^d$, induced by some  $\C^s\otimes\C^d$, with $s$ and $d$ satisfying both the conditions for Eqs.~(\ref{eq:efrandom}) and (\ref{eq:errandom}) to apply. We then have:
$E_{\{F,R\}}\big(\rho_{A:BC}^{(x)}\big)\leq \log d \leq x$, while
$E_F\big(\rho_{A:B}^{(x)}\big)$ and $E_F\big(\rho_{A:C}^{(x)}\big)$ are both equal to $\log d-O(1)\sim \log d \sim x$ with probability greater than $1-2e^{-cd^2/\log^2 d}$, and $E_R\big(\rho_{A:B}^{(x)}\big)$ and $E_R\big(\rho_{A:C}^{(x)}\big)$ are both equal to $\log d-\log (\log d)-O(1)\sim \log d \sim x$ with probability greater than $1-2e^{-cd\log d}$. \hfill $\square$

{\bf Result (2) Non-monogamy for a whole class of additive entanglement measures.}
We now show that a class of additive entanglement measures
also fail monogamy in the sense of Eq.~(\ref{eq:monogamy-relation}). A key role in this result is played by the antisymmetric state, defined as follows \cite{vidal_2002,christandl_2012}. Given a subsystem $A$ with (finite-dimensional) Hilbert space ${\cal H}_A$, the (maximally) antisymmetric state $\alpha_{A^n}$ on ${\cal H}_A^{\otimes n}$ is the normalized projector onto the antisymmetric subspace of ${\cal H}_A^{\otimes n}$. A crucial property of $\alpha_{A^n}$  is that its reduced state on any group of $k$ subsystems, for any $0\leq k\leq n$,  is $\alpha_{A^k}$, i.e.~the antisymmetric state on ${\cal H}_A^{\otimes k}$.

We now focus on entanglement monotones $E$ satisfying the following conditions:
(a) Normalization: For any state $\rho_{AB}$ on ${\cal H}_A \otimes {\cal H}_B$, $E_{A:B}(\rho_{AB}) \leq \min(\log d_A, \log d_B)$;
(b) Lower-boundedness on the bipartite antisymmetric state: Denoting by $\alpha_{AA'}$ the antisymmetric state on ${\cal H}_A \otimes {\cal H}_{A'}$ (with $d_A = d_{A'}$), $E_{A:A'}(\alpha_{AA'}) \geq c/(\log d_A)^t$, where $c,t>0$ are universal constants;
(c) Additivity on product states: For any state $\rho_{AB}$ on ${\cal H}_A \otimes {\cal H}_B$, $E_{A^m:B^m}(\rho_{AB}^{\otimes m}) = m\,E_{A:B}(\rho_{AB})$;
(d) Linearity on mixtures of locally orthogonal states: For any $0\leq \lambda\leq 1$, and any states $\rho_{AB},\sigma_{AB}$ on ${\cal H}_A \otimes {\cal H}_B$ such that $\tr[\rho_A\sigma_A]= \tr[\rho_B\sigma_B]=0$, $E_{A:B}(\lambda\, \rho_{AB}+(1-\lambda)\sigma_{AB})= \lambda\, E_{A:B}(\rho_{AB}) +(1-\lambda)E_{A:B}(\sigma_{AB})$.

Important examples of entanglement measures fulfilling the above requirements are the regularized versions of the entanglement of formation (aka entanglement cost) $E_F^{\infty}$ \cite{bennett_1996,vidal_2002} and of the relative entropy of entanglement $E_{R}^{\infty}$ \cite{brandao_2008,piani_2009}. Indeed, condition (c) holds by construction for any regularized entanglement measure, defined as
$E_{A:B}^{\infty}(\rho_{AB}) = \underset{n \rightarrow \infty}{\lim} \frac1n E_{A^n:B^n}(\rho_{AB}^{\otimes n})$.
Furthermore, in the case of $E_F^{\infty}$ and $E_R^{\infty}$, conditions (a) and (d) are inherited as they hold for $E_F$ and $E_R$. Finally, condition (b) can be seen as some kind of {\it faithfulness} (or geometry-preserving) property: given that the antisymmetric state has constant trace distance from the set of separable states, one may wish for an entanglement measure to stay bounded away from $0$ on the antisymmetric state, dimension-independently as well (or with a sub-polynomial dependence). For $E$ being $E_F^{\infty}$ or $E_R^{\infty}$, a condition stronger than (b) in fact holds, namely $E_{A:A'}(\alpha_{AA'}) \geq c$, where $c>0$ is a universal constant \cite{christandl_2012}.

What we show here is that any entanglement measure $E$, obeying properties (a)--(d), cannot satisfy a non-trivial monogamy relation in the sense of Eq.~\eqref{eq:monogamy-relation}. We first establish the following result.
Let $n\in\N$, $d=2^n+1$, and set ${\cal H}_{A_j}\equiv \C^d$ for each $0\leq j\leq 2^{n}$. Assume next that $E$ satisfies conditions (a) and (b). Then, there exists $0\leq k\leq n-1$ such that
\begin{eqnarray}\label{eq:lemma31}
E_{{A}_0:{A}_1\ldots{A}_{2^k}}\left(\alpha_{A^{2^k+1}}\right) &= & E_{{A}_0:{A}_{2^k+1}\ldots{A}_{2^{k+1}}}\left(\alpha_{A^{2^k+1}}\right) \nonumber \\
&\geq & \mbox{$\left(1-\frac{\ln (n^{t+1}/c)}{n}\right)$} E_{{A}_0:{A}_1\ldots{A}_{2^{k+1}}}\left(\alpha_{A^{2^{k+1}+1}}\right).
\end{eqnarray}

\begin{figure}[t]
\includegraphics[width=8.5cm]{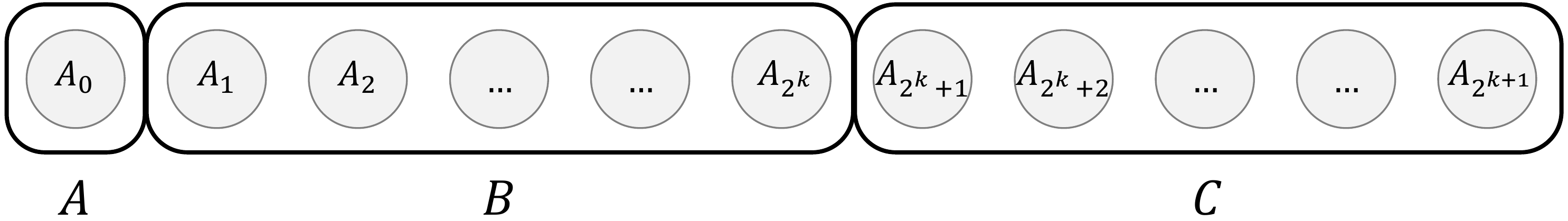}
\caption{Schematic  of the antisymmetric state $\alpha_{A^{2^{k+1}+1}}$ partitioned into three subsystems, $A \equiv A_0$, $B \equiv {A}_1\ldots{A}_{2^k}$, and $C \equiv {A}_{2^k+1}\ldots{A}_{2^{k+1}}$.}
\label{fig:anti}
\end{figure}

To prove Eq.~(\ref{eq:lemma31}), consider a partition of the antisymmetric state $\alpha_{A^{2^{k+1}+1}}$ as illustrated in Fig.~\ref{fig:anti}, and set $g_k = E_{{A}_0:{A}_1\ldots{A}_{2^k}}\left(\alpha_{A_0\ldots A_{2^k}}\right)$ for each
 $0\leq k\leq n$. Then,
\begin{equation} \label{eq:e(k)} \frac{c}{n^t} \approx \frac{c}{(\log d)^t} \leq g_0 \leq g_1 \leq \cdots \leq g_{n-1} \leq g_{n} \leq \log d \approx n. \end{equation}
The last inequality is by property (a), because $d_{A_0}=d = 2^n+1$.
The first inequality is by property (b), because $g_0=E_{{A}_0:{A}_1}\left(\alpha_{A_0A_1}\right)\geq c/(\log d)^t \approx c/n^t$. And the middle inequalities are by monotonicity of $E$ under discarding of subsystems, because for each $0\leq k\leq n-1$, $\alpha_{A_0\ldots A_{2^{k}}}$ is the reduced state of $\alpha_{A_0\ldots A_{2^{k+1}}}$. Now, Eq.~\eqref{eq:e(k)} implies that there exists $0\leq \bar{k}\leq n-1$ such that $g_{\bar{k}}/g_{\bar{k}+1}\geq 1-\ln(n^{t+1}/c)/n$. Indeed, otherwise we would have
$ \frac{g_0}{g_n} = \prod_{k=0}^{n-1} \frac{g_k}{g_{k+1}} < \left(1-\frac{\ln(n^{t+1}/c)}{n}\right)^n \leq \frac{c}{n^{t+1}}$,
which contradicts Eq.~\eqref{eq:e(k)}. Then, as we have on one hand $g_{\bar{k}+1}=E_{{A}_0:{A}_1\ldots{A}_{2^{\bar{k}+1}}}\big(\alpha_{A^{2^{\bar{k}+1}+1}}\big)$, and on the other hand $g_{\bar{k}}= E_{{A}_0:{A}_1\ldots{A}_{2^{\bar{k}}}}\big(\alpha_{A^{2^{\bar{k}}+1}}\big) = E_{{A}_0:{A}_{2^{\bar{k}}+1}\ldots{A}_{2^{\bar{k}+1}}}\big(\alpha_{A^{2^{\bar{k}}+1}}\big)$, Eq.~(\ref{eq:lemma31}) is proven.

The main result of this section then follows immediately. Namely, once again, there exist states $\rho_{ABC}^{(x)}$ on Hilbert spaces ${\cal H}_A^{(x)} \otimes {\cal H}_B^{(x)}  \otimes {\cal H}_C^{(x)}$ such that Eq.~(\ref{eq:nomonoEFER}) holds, for $E$ now denoting any entanglement measure satisfying conditions (a)--(d).

The proof goes as follows. As $E$ satisfies  (a) and (b), we know by Eq.~\eqref{eq:lemma31} that, for any $d\in\N$, there exists a state $\rho_{ABC}$ on ${\cal H}_A \otimes {\cal H}_B \otimes {\cal H}_C$, where ${\cal H}_{A}\equiv\C^d$ and ${\cal H}_{B}\equiv{\cal H}_{C}\equiv(\C^d)^{\otimes 2^k}$ for some $0\leq k\leq \lfloor\log d\rfloor$ (see Fig.~\ref{fig:anti}), such that
$E_{A:B}\left(\rho_{AB}\right)$ and $E_{A:C}\left(\rho_{AC}\right)$ are both lower bounded by $\left(1-\frac{\log(\log^{t+1} d/c)}{\log d}\right)E_{A:BC}\left(\rho_{ABC}\right) = (1-o(1))E_{A:BC}\left(\rho_{ABC}\right)$.
Now, by property (c), for any $m\in\N$, considering $\rho_{ABC}^{\otimes m}$ instead of $\rho_{ABC}$ (and relabelling ${A}^{\otimes m}$ into ${A}$ etc.) will multiply all values of $E$ by a factor $m$. By property (d), for any $0\leq \lambda\leq 1$ and any separable state $\sigma_{ABC}$ (across the cut ${A}:{B}{C}$) which is locally orthogonal to $\rho_{ABC}$, considering $\lambda\,\rho_{ABC}+(1-\lambda)\sigma_{ABC}$ instead of $\rho_{ABC}$ will multiply all values of $E$ by a factor $\lambda$. Consequently, any value $x>0$ for $E_{A:BC}\left(\rho_{ABC}\right)$ is indeed attainable, on some suitably large  Hilbert space ${\cal H}_A^{(x)} \otimes {\cal H}_B^{(x)}  \otimes {\cal H}_C^{(x)}$.
\hfill $\square$

Recapitulating, we demonstrated that entanglement measures which faithfully capture the geometric properties of the antisymmetric state cannot be monogamous in general.
Conversely, there exist relevant entanglement measures for which the desirable condition (b) does not hold---such as the squashed entanglement, which scales as $o(1/d_A)$ on the antisymmetric state $\alpha_{AA'}$---yet monogamy holds instead, even in the original CKW form (\ref{eq:ckw}) \cite{christandl_2004,koashi_2004}. This is the origin of the ``monogamy vs faithfulness'' dilemma discussed in the introduction.

{\bf Result~(3)~Recovering~monogamy:~dimension\!~\!\!\:-\:\!\!~\!dependent relations.}
In the previous two sections, we proved that several important entanglement measures
cannot obey a monogamy relation of the form \eqref{eq:monogamy-relation} with $f$ a universal function. Nevertheless, it may become possible to establish such an inequality if we allow the function $f$ to be dimension-dependent. For instance, the  squared entanglement of formation obeys the CKW inequality for arbitrary three-qubit states \cite{bai_2014}, which means that choosing $f(x,y)=\sqrt{x^2+y^2}$ in Eq.~(\ref{eq:monogamy-relation}), as depicted in Fig.~\ref{fig:func} (dotted boundary), makes $E_F$ monogamous when restricted to Hilbert spaces with $d_A=d_B=d_C=2$.

The third main result of this Letter is to show that non-trivial dimension-dependent monogamy relations can be established for $E_F$ and $E_R^\infty$ in any finite dimension.
Concretely, for any state $\rho_{ABC}$ on a Hilbert space ${\cal H}_A \otimes {\cal H}_B \otimes {\cal H}_C$, it holds
\begin{subequations}\label{eq:edd}
\begin{eqnarray}
\!\!{E_F}\left(\rho_{A:BC}\right) \geq  \max \Big(\!\!\!\!\! & E_F\left(\rho_{A:B}\right) + \frac{c}{d_Ad_C\log^8 d_{A,C}}\big[E_F\left(\rho_{A:C}\right)\big]^8,\ \  \nonumber \\
& E_F\left(\rho_{A:C}\right) + \frac{c}{d_Ad_B\log^8 d_{A,B}}\big[E_F\left(\rho_{A:B}\right)\big]^8 \Big), \label{eq:eddf} \\
\!\! E_R^{\infty}\left(\rho_{A:BC}\right) \geq \max \Big(\!\!\!\!\! & E_R^{\infty}\left(\rho_{A:B}\right) + \frac{c'}{d_Ad_C\log^4d_{A,C}}\big[E_R^{\infty}\left(\rho_{A:C}\right)\big]^4,\nonumber \\
& E_R^{\infty}\left(\rho_{A:C}\right) + \frac{c'}{d_Ad_B\log^4d_{A,B}}\big[E_R^{\infty}\left(\rho_{A:B}\right)\big]^4 \Big),\ \  \label{eq:eddr}
\end{eqnarray}
\end{subequations}
where  $c,c'>0$ are universal constants, and we set $d_{A,B} = \min(d_A,d_B)$, $d_{A,C} =  \min(d_A,d_C)$. An instance of Eq.~(\ref{eq:eddr}) is qualitatively illustrated in Fig.~\ref{fig:func} (dot-dashed boundary).

The proof of Eqs.~(\ref{eq:edd}) makes use of results from Refs.~\cite{piani_2009,brandao_2011,matthews_2009,winter_2015}, and is provided in \cite{epaps}.
While Eqs.~(\ref{eq:edd}) may not be tight \cite{epaps}, they do establish that the involved entanglement measures can be effectively regarded as monogamous according to Eq.~(\ref{eq:monogamy-relation}) in any finite dimension, even though the constraints become trivial in the limit of infinite dimension, in agreement with results (1) and (2). Notice further that Eqs.~(\ref{eq:edd}) encapsulate \emph{strict} monogamy, as the constraining functions satisfy $f(x,y)>\max(x,y)$ for all positive $x$ and $y$. This implies that if, say, $E(\rho_{A:B})=E(\rho_{A:BC})$, then $E(\rho_{A:C})=0$ for $E$ being either $E_F$ or $E_R^\infty$, which means that $A$ and $C$ must be unentangled, as both measures vanish only on separable states.

{\bf Conclusions}. We addressed on general grounds the question of whether entanglement measures should be monogamous in the sense of obeying a quantitative constraint akin to Eq.~\eqref{eq:ckw} introduced in \cite{coffman_2000}. We showed that paradigmatic measures such as the entanglement of formation and the relative entropy of entanglement, as well as their regularizations, cannot be monogamous in general, as they cannot satisfy any non-trivial general relation of the form (\ref{eq:monogamy-relation}) limiting the distribution of bipartite entanglement in arbitrary tripartite states. Monogamy can nonetheless be recovered if the constraints are made dependent on the (finite) dimension of the system.


The present study substantially advances our understanding
of entanglement and the complex laws governing its distribution in systems of multiple parties, and paves the way to more practical developments in quantum communication and computation.
The concept of monogamy as studied here is of particular physical relevance, as the structure of Eq.~(\ref{eq:monogamy-relation}) lends itself to be applied repeatedly to establish limitations in a many-body scenario \cite{amico_2008}, allowing one to compare the distribution of entanglement on equal footing across the various parts of a composite system, unlike e.g.~the case of hybrid monogamy relations involving different quantifiers \cite{koashi_2004}.

It will be worth investigating further links between the
phenomenon of monogamy and the so-called quantum marginal problem \cite{marginalp} as well as  the fact that information cannot be arbitrarily distributed in multipartite quantum states \cite{Lieb,Linden,Cadney,Siewert}.
Implications of our study for progress in other fields like condensed matter \cite{amico_2008} and cosmology \cite{amps_2013}, where monogamy of entanglement takes centre stage, also deserve further study.

{\bf Acknowledgments}. We acknowledge financial support from the European Union's Horizon 2020 Research and Innovation Programme under the Marie Sklodowska-Curie Action OPERACQC (Grant Agreement No.~661338), the European Research Council through the Starting Grant GQCOP (Grant Agreeement No.~637352) and the Advanced Grant IRQUAT (Grant Agreement No.~ 267386), the European Commission through the STREP RAQUEL (Grant Agreement No.~FP7-ICT-2013-C-323970), the Spanish MINECO (Project No.~FIS2013-40627-P), the Generalitat de Catalunya CIRIT (Project No.~2014-SGR-966), the Swiss National Science Foundation (AMBIZIONE PZ00P2{\_}161351), the French CNRS (ANR Projects OSQPI 11-BS01-0008 and Stoq 14-CE25-0033), and the Austrian Science Fund (FWF) through the START Project Y879-N27. G.~A.~thanks B~Regula for fruitful discussions.

\bibliographystyle{apsrev}
\bibliography{nomono}

\begin{thebibliography}{69}
\expandafter\ifx\csname natexlab\endcsname\relax\def\natexlab#1{#1}\fi
\expandafter\ifx\csname bibnamefont\endcsname\relax
  \def\bibnamefont#1{#1}\fi
\expandafter\ifx\csname bibfnamefont\endcsname\relax
  \def\bibfnamefont#1{#1}\fi
\expandafter\ifx\csname citenamefont\endcsname\relax
  \def\citenamefont#1{#1}\fi
\expandafter\ifx\csname url\endcsname\relax
  \def\url#1{\texttt{#1}}\fi
\expandafter\ifx\csname urlprefix\endcsname\relax\def\urlprefix{URL }\fi
\providecommand{\bibinfo}[2]{#2}
\providecommand{\eprint}[2][]{\url{#2}}

\bibitem[{\citenamefont{Schr\"odinger}(1935)}]{schr}
\bibinfo{author}{\bibfnamefont{E.}~\bibnamefont{Schr\"odinger}},
  \bibinfo{journal}{Proc. Camb. Phil. Soc.} \textbf{\bibinfo{volume}{31}},
  \bibinfo{pages}{553} (\bibinfo{year}{1935}).

\bibitem[{\citenamefont{Schr\"odinger}(1936)}]{schr2}
\bibinfo{author}{\bibfnamefont{E.}~\bibnamefont{Schr\"odinger}},
  \bibinfo{journal}{Proc. Camb. Phil. Soc.} \textbf{\bibinfo{volume}{32}},
  \bibinfo{pages}{446} (\bibinfo{year}{1936}).

\bibitem[{\citenamefont{Horodecki et~al.}(2009)\citenamefont{Horodecki,
  Horodecki, Horodecki, and Horodecki}}]{horodecki_2009}
\bibinfo{author}{\bibfnamefont{R.}~\bibnamefont{Horodecki}},
  \bibinfo{author}{\bibfnamefont{P.}~\bibnamefont{Horodecki}},
  \bibinfo{author}{\bibfnamefont{M.}~\bibnamefont{Horodecki}},
  \bibnamefont{and}
  \bibinfo{author}{\bibfnamefont{K.}~\bibnamefont{Horodecki}},
  \bibinfo{journal}{Rev Mod Phys} \textbf{\bibinfo{volume}{81}},
  \bibinfo{pages}{865} (\bibinfo{year}{2009}).

\bibitem[{\citenamefont{Eisert et~al.}(2010)\citenamefont{Eisert, Cramer, and
  Plenio}}]{reviewarealaw}
\bibinfo{author}{\bibfnamefont{J.}~\bibnamefont{Eisert}},
  \bibinfo{author}{\bibfnamefont{M.}~\bibnamefont{Cramer}}, \bibnamefont{and}
  \bibinfo{author}{\bibfnamefont{M.~B.} \bibnamefont{Plenio}},
  \bibinfo{journal}{Rev. Mod. Phys.} \textbf{\bibinfo{volume}{82}},
  \bibinfo{pages}{277} (\bibinfo{year}{2010}).

\bibitem[{\citenamefont{Amico et~al.}(2008)\citenamefont{Amico, Fazio,
  Osterloh, and Vedral}}]{amico_2008}
\bibinfo{author}{\bibfnamefont{L.}~\bibnamefont{Amico}},
  \bibinfo{author}{\bibfnamefont{R.}~\bibnamefont{Fazio}},
  \bibinfo{author}{\bibfnamefont{A.}~\bibnamefont{Osterloh}}, \bibnamefont{and}
  \bibinfo{author}{\bibfnamefont{V.}~\bibnamefont{Vedral}},
  \bibinfo{journal}{Rev. Mod. Phys.} \textbf{\bibinfo{volume}{80}},
  \bibinfo{pages}{517} (\bibinfo{year}{2008}).

\bibitem[{\citenamefont{Dowling and Milburn}(2003)}]{dowling_2003}
\bibinfo{author}{\bibfnamefont{J.~P.} \bibnamefont{Dowling}} \bibnamefont{and}
  \bibinfo{author}{\bibfnamefont{G.~J.} \bibnamefont{Milburn}},
  \bibinfo{journal}{Phil. Trans. Roy. Soc. A} \textbf{\bibinfo{volume}{361}},
  \bibinfo{pages}{1655} (\bibinfo{year}{2003}).

\bibitem[{\citenamefont{Terhal}(2004)}]{terhal_2004}
\bibinfo{author}{\bibfnamefont{B.}~\bibnamefont{Terhal}}, \bibinfo{journal}{IBM
  J. Res. Dev.} \textbf{\bibinfo{volume}{48}}, \bibinfo{pages}{71}
  (\bibinfo{year}{2004}).

\bibitem[{\citenamefont{Cavalcanti et~al.}(2005)\citenamefont{Cavalcanti,
  Brand\~ao, and Terra~Cunha}}]{aremax}
\bibinfo{author}{\bibfnamefont{D.}~\bibnamefont{Cavalcanti}},
  \bibinfo{author}{\bibfnamefont{F.~G. S.~L.} \bibnamefont{Brand\~ao}},
  \bibnamefont{and} \bibinfo{author}{\bibfnamefont{M.~O.}
  \bibnamefont{Terra~Cunha}}, \bibinfo{journal}{Phys. Rev. A}
  \textbf{\bibinfo{volume}{72}}, \bibinfo{pages}{040303}
  (\bibinfo{year}{2005}).

\bibitem[{\citenamefont{Fannes et~al.}(1988)\citenamefont{Fannes, Lewis, and
  Verbeure}}]{fannes_1988}
\bibinfo{author}{\bibfnamefont{M.}~\bibnamefont{Fannes}},
  \bibinfo{author}{\bibfnamefont{J.~T.} \bibnamefont{Lewis}}, \bibnamefont{and}
  \bibinfo{author}{\bibfnamefont{A.}~\bibnamefont{Verbeure}},
  \bibinfo{journal}{Lett. Math. Phys.} \textbf{\bibinfo{volume}{15}},
  \bibinfo{pages}{255} (\bibinfo{year}{1988}).

\bibitem[{\citenamefont{Werner}(1989)}]{werner_1989}
\bibinfo{author}{\bibfnamefont{R.~F.} \bibnamefont{Werner}},
  \bibinfo{journal}{Lett. Math. Phys.} \textbf{\bibinfo{volume}{17}},
  \bibinfo{pages}{359} (\bibinfo{year}{1989}).

\bibitem[{\citenamefont{Doherty et~al.}(2004)\citenamefont{Doherty, Parrilo,
  and Spedalieri}}]{doherty_2004}
\bibinfo{author}{\bibfnamefont{A.~C.} \bibnamefont{Doherty}},
  \bibinfo{author}{\bibfnamefont{P.~A.} \bibnamefont{Parrilo}},
  \bibnamefont{and} \bibinfo{author}{\bibfnamefont{F.~M.}
  \bibnamefont{Spedalieri}}, \bibinfo{journal}{Phys. Rev. A}
  \textbf{\bibinfo{volume}{69}}, \bibinfo{pages}{022308}
  (\bibinfo{year}{2004}).

\bibitem[{\citenamefont{Yang}(2006)}]{yang_2006}
\bibinfo{author}{\bibfnamefont{D.}~\bibnamefont{Yang}}, \bibinfo{journal}{Phys.
  Lett. A} \textbf{\bibinfo{volume}{360}}, \bibinfo{pages}{249 }
  (\bibinfo{year}{2006}).

\bibitem[{\citenamefont{Bae and Ac{\'\i}n}(2006)}]{bae_2006}
\bibinfo{author}{\bibfnamefont{J.}~\bibnamefont{Bae}} \bibnamefont{and}
  \bibinfo{author}{\bibfnamefont{A.}~\bibnamefont{Ac{\'\i}n}},
  \bibinfo{journal}{Phys. Rev. Lett.} \textbf{\bibinfo{volume}{97}},
  \bibinfo{pages}{030402} (\bibinfo{year}{2006}).

\bibitem[{\citenamefont{Chiribella and
  D'Ariano}(2006)}]{chiribellabecomeclassical}
\bibinfo{author}{\bibfnamefont{G.}~\bibnamefont{Chiribella}} \bibnamefont{and}
  \bibinfo{author}{\bibfnamefont{G.~M.} \bibnamefont{D'Ariano}},
  \bibinfo{journal}{Phys. Rev. Lett.} \textbf{\bibinfo{volume}{97}},
  \bibinfo{pages}{250503} (\bibinfo{year}{2006}).

\bibitem[{\citenamefont{Brand{\~a}o et~al.}(2015)\citenamefont{Brand{\~a}o,
  Piani, and Horodecki}}]{brandao2015generic}
\bibinfo{author}{\bibfnamefont{F.~G.} \bibnamefont{Brand{\~a}o}},
  \bibinfo{author}{\bibfnamefont{M.}~\bibnamefont{Piani}}, \bibnamefont{and}
  \bibinfo{author}{\bibfnamefont{P.}~\bibnamefont{Horodecki}},
  \bibinfo{journal}{Nature Commun.} \textbf{\bibinfo{volume}{6}},
  \bibinfo{pages}{7908} (\bibinfo{year}{2015}).

\bibitem[{\citenamefont{Ekert}(1991)}]{ekert_1991}
\bibinfo{author}{\bibfnamefont{A.~K.} \bibnamefont{Ekert}},
  \bibinfo{journal}{Phys. Rev. Lett.} \textbf{\bibinfo{volume}{67}},
  \bibinfo{pages}{661} (\bibinfo{year}{1991}).

\bibitem[{\citenamefont{Devetak and Winter}(2005)}]{devetak_2005}
\bibinfo{author}{\bibfnamefont{I.}~\bibnamefont{Devetak}} \bibnamefont{and}
  \bibinfo{author}{\bibfnamefont{A.}~\bibnamefont{Winter}},
  \bibinfo{journal}{Proc. Roy. Soc. A} \textbf{\bibinfo{volume}{461}},
  \bibinfo{pages}{207} (\bibinfo{year}{2005}).

\bibitem[{\citenamefont{Paw\l{}owski}(2010)}]{pawlovski_2010}
\bibinfo{author}{\bibfnamefont{M.}~\bibnamefont{Paw\l{}owski}},
  \bibinfo{journal}{Phys. Rev. A} \textbf{\bibinfo{volume}{82}},
  \bibinfo{pages}{032313} (\bibinfo{year}{2010}).

\bibitem[{\citenamefont{Vazirani and Vidick}(2014)}]{vazirani_2014}
\bibinfo{author}{\bibfnamefont{U.}~\bibnamefont{Vazirani}} \bibnamefont{and}
  \bibinfo{author}{\bibfnamefont{T.}~\bibnamefont{Vidick}},
  \bibinfo{journal}{Phys. Rev. Lett.} \textbf{\bibinfo{volume}{113}},
  \bibinfo{pages}{140501} (\bibinfo{year}{2014}).

\bibitem[{\citenamefont{Osborne and Verstraete}(2006)}]{osborne_2006}
\bibinfo{author}{\bibfnamefont{T.~J.} \bibnamefont{Osborne}} \bibnamefont{and}
  \bibinfo{author}{\bibfnamefont{F.}~\bibnamefont{Verstraete}},
  \bibinfo{journal}{Phys. Rev. Lett.} \textbf{\bibinfo{volume}{96}},
  \bibinfo{pages}{220503} (\bibinfo{year}{2006}).

\bibitem[{\citenamefont{Ferraro et~al.}(2007)\citenamefont{Ferraro,
  Garc\'{\i}a-Saez, and Ac\'{\i}n}}]{ferraro_2007}
\bibinfo{author}{\bibfnamefont{A.}~\bibnamefont{Ferraro}},
  \bibinfo{author}{\bibfnamefont{A.}~\bibnamefont{Garc\'{\i}a-Saez}},
  \bibnamefont{and}
  \bibinfo{author}{\bibfnamefont{A.}~\bibnamefont{Ac\'{\i}n}},
  \bibinfo{journal}{Phys. Rev. A} \textbf{\bibinfo{volume}{76}},
  \bibinfo{pages}{052321} (\bibinfo{year}{2007}).

\bibitem[{\citenamefont{Giampaolo et~al.}(2011)\citenamefont{Giampaolo, Gualdi,
  Monras, and Illuminati}}]{giampaolo_2011}
\bibinfo{author}{\bibfnamefont{S.~M.} \bibnamefont{Giampaolo}},
  \bibinfo{author}{\bibfnamefont{G.}~\bibnamefont{Gualdi}},
  \bibinfo{author}{\bibfnamefont{A.}~\bibnamefont{Monras}}, \bibnamefont{and}
  \bibinfo{author}{\bibfnamefont{F.}~\bibnamefont{Illuminati}},
  \bibinfo{journal}{Phys. Rev. Lett.} \textbf{\bibinfo{volume}{107}},
  \bibinfo{pages}{260602} (\bibinfo{year}{2011}).

\bibitem[{\citenamefont{Giampaolo et~al.}(2015)\citenamefont{Giampaolo,
  Hiesmayr, and Illuminati}}]{giampaolo_2015}
\bibinfo{author}{\bibfnamefont{S.~M.} \bibnamefont{Giampaolo}},
  \bibinfo{author}{\bibfnamefont{B.~C.} \bibnamefont{Hiesmayr}},
  \bibnamefont{and}
  \bibinfo{author}{\bibfnamefont{F.}~\bibnamefont{Illuminati}},
  \bibinfo{journal}{Phys. Rev. B} \textbf{\bibinfo{volume}{92}},
  \bibinfo{pages}{144406} (\bibinfo{year}{2015}).

\bibitem[{\citenamefont{Meichanetzidis
  et~al.}(2016)\citenamefont{Meichanetzidis, Eisert, Cirio, Lahtinen, and
  Pachos}}]{jens_2016}
\bibinfo{author}{\bibfnamefont{K.}~\bibnamefont{Meichanetzidis}},
  \bibinfo{author}{\bibfnamefont{J.}~\bibnamefont{Eisert}},
  \bibinfo{author}{\bibfnamefont{M.}~\bibnamefont{Cirio}},
  \bibinfo{author}{\bibfnamefont{V.}~\bibnamefont{Lahtinen}}, \bibnamefont{and}
  \bibinfo{author}{\bibfnamefont{J.~K.} \bibnamefont{Pachos}},
  \bibinfo{journal}{Phys. Rev. Lett.} \textbf{\bibinfo{volume}{116}},
  \bibinfo{pages}{130501} (\bibinfo{year}{2016}).

\bibitem[{\citenamefont{Coffman et~al.}(2000)\citenamefont{Coffman, Kundu, and
  Wootters}}]{coffman_2000}
\bibinfo{author}{\bibfnamefont{V.}~\bibnamefont{Coffman}},
  \bibinfo{author}{\bibfnamefont{J.}~\bibnamefont{Kundu}}, \bibnamefont{and}
  \bibinfo{author}{\bibfnamefont{W.~K.} \bibnamefont{Wootters}},
  \bibinfo{journal}{Phys. Rev. A} \textbf{\bibinfo{volume}{61}},
  \bibinfo{pages}{052306} (\bibinfo{year}{2000}).

\bibitem[{foo()}]{footnotenote}
\bibinfo{note}{We will denote bipartite entanglement either by
  $E_{A:B}(\rho_{AB})$ or $E(\rho_{A:B})$ throughout the manuscript, depending
  on convenience.}

\bibitem[{\citenamefont{Dennison and Wootters}(2001)}]{Dennison}
\bibinfo{author}{\bibfnamefont{K.~A.} \bibnamefont{Dennison}} \bibnamefont{and}
  \bibinfo{author}{\bibfnamefont{W.~K.} \bibnamefont{Wootters}},
  \bibinfo{journal}{Phys. Rev. A} \textbf{\bibinfo{volume}{65}},
  \bibinfo{pages}{010301} (\bibinfo{year}{2001}).

\bibitem[{\citenamefont{Koashi and Winter}(2004)}]{koashi_2004}
\bibinfo{author}{\bibfnamefont{M.}~\bibnamefont{Koashi}} \bibnamefont{and}
  \bibinfo{author}{\bibfnamefont{A.}~\bibnamefont{Winter}},
  \bibinfo{journal}{Phys. Rev. A} \textbf{\bibinfo{volume}{69}},
  \bibinfo{pages}{022309} (\bibinfo{year}{2004}).

\bibitem[{\citenamefont{Adesso and Illuminati}(2006)}]{adesso_2006}
\bibinfo{author}{\bibfnamefont{G.}~\bibnamefont{Adesso}} \bibnamefont{and}
  \bibinfo{author}{\bibfnamefont{F.}~\bibnamefont{Illuminati}},
  \bibinfo{journal}{New J. Phys.} \textbf{\bibinfo{volume}{8}},
  \bibinfo{pages}{15} (\bibinfo{year}{2006}).

\bibitem[{\citenamefont{Adesso and Illuminati}(2007)}]{adesso_2007}
\bibinfo{author}{\bibfnamefont{G.}~\bibnamefont{Adesso}} \bibnamefont{and}
  \bibinfo{author}{\bibfnamefont{F.}~\bibnamefont{Illuminati}},
  \bibinfo{journal}{Phys. Rev. Lett.} \textbf{\bibinfo{volume}{99}},
  \bibinfo{pages}{150501} (\bibinfo{year}{2007}).

\bibitem[{\citenamefont{Adesso et~al.}(2012)\citenamefont{Adesso, Girolami, and
  Serafini}}]{adesso_2012}
\bibinfo{author}{\bibfnamefont{G.}~\bibnamefont{Adesso}},
  \bibinfo{author}{\bibfnamefont{D.}~\bibnamefont{Girolami}}, \bibnamefont{and}
  \bibinfo{author}{\bibfnamefont{A.}~\bibnamefont{Serafini}},
  \bibinfo{journal}{Phys. Rev. Lett.} \textbf{\bibinfo{volume}{109}},
  \bibinfo{pages}{190502} (\bibinfo{year}{2012}).

\bibitem[{\citenamefont{Kim et~al.}(2009)\citenamefont{Kim, Das, and
  Sanders}}]{kim_2009}
\bibinfo{author}{\bibfnamefont{J.~S.} \bibnamefont{Kim}},
  \bibinfo{author}{\bibfnamefont{A.}~\bibnamefont{Das}}, \bibnamefont{and}
  \bibinfo{author}{\bibfnamefont{B.~C.} \bibnamefont{Sanders}},
  \bibinfo{journal}{Phys. Rev. A} \textbf{\bibinfo{volume}{79}},
  \bibinfo{pages}{012329} (\bibinfo{year}{2009}).

\bibitem[{\citenamefont{Kim and Sanders}(2010)}]{kim_2010}
\bibinfo{author}{\bibfnamefont{J.~S.} \bibnamefont{Kim}} \bibnamefont{and}
  \bibinfo{author}{\bibfnamefont{B.~C.} \bibnamefont{Sanders}},
  \bibinfo{journal}{J. Phys. A: Math. Theor.} \textbf{\bibinfo{volume}{43}},
  \bibinfo{pages}{445305} (\bibinfo{year}{2010}).

\bibitem[{\citenamefont{Fanchini et~al.}(2013)\citenamefont{Fanchini,
  de~Oliveira, Castelano, and Cornelio}}]{fanchini_2013}
\bibinfo{author}{\bibfnamefont{F.~F.} \bibnamefont{Fanchini}},
  \bibinfo{author}{\bibfnamefont{M.~C.} \bibnamefont{de~Oliveira}},
  \bibinfo{author}{\bibfnamefont{L.~K.} \bibnamefont{Castelano}},
  \bibnamefont{and} \bibinfo{author}{\bibfnamefont{M.~F.}
  \bibnamefont{Cornelio}}, \bibinfo{journal}{Phys. Rev. A}
  \textbf{\bibinfo{volume}{87}}, \bibinfo{pages}{032317}
  (\bibinfo{year}{2013}).

\bibitem[{\citenamefont{de~Oliveira et~al.}(2014)\citenamefont{de~Oliveira,
  Cornelio, and Fanchini}}]{fanchini_2014}
\bibinfo{author}{\bibfnamefont{T.~R.} \bibnamefont{de~Oliveira}},
  \bibinfo{author}{\bibfnamefont{M.~F.} \bibnamefont{Cornelio}},
  \bibnamefont{and} \bibinfo{author}{\bibfnamefont{F.~F.}
  \bibnamefont{Fanchini}}, \bibinfo{journal}{Phys. Rev. A}
  \textbf{\bibinfo{volume}{89}}, \bibinfo{pages}{034303}
  (\bibinfo{year}{2014}).

\bibitem[{\citenamefont{Bai et~al.}(2014)\citenamefont{Bai, Xu, and
  Wang}}]{bai_2014}
\bibinfo{author}{\bibfnamefont{Y.-K.} \bibnamefont{Bai}},
  \bibinfo{author}{\bibfnamefont{Y.-F.} \bibnamefont{Xu}}, \bibnamefont{and}
  \bibinfo{author}{\bibfnamefont{Z.~D.} \bibnamefont{Wang}},
  \bibinfo{journal}{Phys. Rev. Lett.} \textbf{\bibinfo{volume}{113}},
  \bibinfo{pages}{100503} (\bibinfo{year}{2014}).

\bibitem[{\citenamefont{Luo and Li}(2015)}]{luo_2015}
\bibinfo{author}{\bibfnamefont{Y.}~\bibnamefont{Luo}} \bibnamefont{and}
  \bibinfo{author}{\bibfnamefont{Y.}~\bibnamefont{Li}}, \bibinfo{journal}{Ann.
  Phys.} \textbf{\bibinfo{volume}{362}}, \bibinfo{pages}{511}
  (\bibinfo{year}{2015}).

\bibitem[{\citenamefont{Cornelio}(2013)}]{cornelio_2013}
\bibinfo{author}{\bibfnamefont{M.~F.} \bibnamefont{Cornelio}},
  \bibinfo{journal}{Phys. Rev. A} \textbf{\bibinfo{volume}{87}},
  \bibinfo{pages}{032330} (\bibinfo{year}{2013}).

\bibitem[{\citenamefont{Regula et~al.}(2014)\citenamefont{Regula, {Di Martino},
  Lee, and Adesso}}]{regula_2014}
\bibinfo{author}{\bibfnamefont{B.}~\bibnamefont{Regula}},
  \bibinfo{author}{\bibfnamefont{S.}~\bibnamefont{{Di Martino}}},
  \bibinfo{author}{\bibfnamefont{S.}~\bibnamefont{Lee}}, \bibnamefont{and}
  \bibinfo{author}{\bibfnamefont{G.}~\bibnamefont{Adesso}},
  \bibinfo{journal}{Phys. Rev. Lett.} \textbf{\bibinfo{volume}{113}},
  \bibinfo{pages}{110501} (\bibinfo{year}{2014}).

\bibitem[{\citenamefont{Christandl and Winter}(2004)}]{christandl_2004}
\bibinfo{author}{\bibfnamefont{M.}~\bibnamefont{Christandl}} \bibnamefont{and}
  \bibinfo{author}{\bibfnamefont{A.}~\bibnamefont{Winter}},
  \bibinfo{journal}{J. Math. Phys.} \textbf{\bibinfo{volume}{45}},
  \bibinfo{pages}{829} (\bibinfo{year}{2004}).

\bibitem[{\citenamefont{Horodecki and Piani}(2012)}]{horodecki2007quantum}
\bibinfo{author}{\bibfnamefont{M.}~\bibnamefont{Horodecki}} \bibnamefont{and}
  \bibinfo{author}{\bibfnamefont{M.}~\bibnamefont{Piani}}, \bibinfo{journal}{J.
  Phys. A.: Math. Theor.} \textbf{\bibinfo{volume}{45}},
  \bibinfo{pages}{105306} (\bibinfo{year}{2012}).

\bibitem[{\citenamefont{Streltsov et~al.}(2012)\citenamefont{Streltsov, Adesso,
  Piani, and Bru{\ss}}}]{streltsov_2012}
\bibinfo{author}{\bibfnamefont{A.}~\bibnamefont{Streltsov}},
  \bibinfo{author}{\bibfnamefont{G.}~\bibnamefont{Adesso}},
  \bibinfo{author}{\bibfnamefont{M.}~\bibnamefont{Piani}}, \bibnamefont{and}
  \bibinfo{author}{\bibfnamefont{D.}~\bibnamefont{Bru{\ss}}},
  \bibinfo{journal}{Phys. Rev. Lett.} \textbf{\bibinfo{volume}{109}},
  \bibinfo{pages}{050503} (\bibinfo{year}{2012}).

\bibitem[{\citenamefont{Bennett et~al.}(1996)\citenamefont{Bennett, DiVincenzo,
  Smolin, and Wootters}}]{bennett_1996}
\bibinfo{author}{\bibfnamefont{C.~H.} \bibnamefont{Bennett}},
  \bibinfo{author}{\bibfnamefont{D.~P.} \bibnamefont{DiVincenzo}},
  \bibinfo{author}{\bibfnamefont{J.~A.} \bibnamefont{Smolin}},
  \bibnamefont{and} \bibinfo{author}{\bibfnamefont{W.~K.}
  \bibnamefont{Wootters}}, \bibinfo{journal}{Phys. Rev. A}
  \textbf{\bibinfo{volume}{54}}, \bibinfo{pages}{3824} (\bibinfo{year}{1996}).

\bibitem[{\citenamefont{Plenio and Virmani}(2007)}]{virmani_2007}
\bibinfo{author}{\bibfnamefont{M.~B.} \bibnamefont{Plenio}} \bibnamefont{and}
  \bibinfo{author}{\bibfnamefont{S.}~\bibnamefont{Virmani}},
  \bibinfo{journal}{Quantum Info. Comput.} \textbf{\bibinfo{volume}{7}},
  \bibinfo{pages}{1} (\bibinfo{year}{2007}).

\bibitem[{\citenamefont{Salini et~al.}(2014)\citenamefont{Salini, Prabhu, De,
  and Sen}}]{prabhu_2014}
\bibinfo{author}{\bibfnamefont{K.}~\bibnamefont{Salini}},
  \bibinfo{author}{\bibfnamefont{R.}~\bibnamefont{Prabhu}},
  \bibinfo{author}{\bibfnamefont{A.~S.} \bibnamefont{De}}, \bibnamefont{and}
  \bibinfo{author}{\bibfnamefont{U.}~\bibnamefont{Sen}}, \bibinfo{journal}{Ann.
  Phys.} \textbf{\bibinfo{volume}{348}}, \bibinfo{pages}{297 }
  (\bibinfo{year}{2014}).

\bibitem[{\citenamefont{Vedral et~al.}(1997)\citenamefont{Vedral, Plenio,
  Rippin, and Knight}}]{vedral_1997}
\bibinfo{author}{\bibfnamefont{V.}~\bibnamefont{Vedral}},
  \bibinfo{author}{\bibfnamefont{M.~B.} \bibnamefont{Plenio}},
  \bibinfo{author}{\bibfnamefont{M.~A.} \bibnamefont{Rippin}},
  \bibnamefont{and} \bibinfo{author}{\bibfnamefont{P.~L.}
  \bibnamefont{Knight}}, \bibinfo{journal}{Phys. Rev. Lett.}
  \textbf{\bibinfo{volume}{78}}, \bibinfo{pages}{2275} (\bibinfo{year}{1997}).

\bibitem[{\citenamefont{Hayden et~al.}(2006)\citenamefont{Hayden, Leung, and
  Winter}}]{hayden_2006}
\bibinfo{author}{\bibfnamefont{P.}~\bibnamefont{Hayden}},
  \bibinfo{author}{\bibfnamefont{D.~W.} \bibnamefont{Leung}}, \bibnamefont{and}
  \bibinfo{author}{\bibfnamefont{A.}~\bibnamefont{Winter}},
  \bibinfo{journal}{Commun. Math. Phys.} \textbf{\bibinfo{volume}{265}},
  \bibinfo{pages}{95} (\bibinfo{year}{2006}).

\bibitem[{\citenamefont{Vidal et~al.}(2002)\citenamefont{Vidal, D\"ur, and
  Cirac}}]{vidal_2002}
\bibinfo{author}{\bibfnamefont{G.}~\bibnamefont{Vidal}},
  \bibinfo{author}{\bibfnamefont{W.}~\bibnamefont{D\"ur}}, \bibnamefont{and}
  \bibinfo{author}{\bibfnamefont{J.~I.} \bibnamefont{Cirac}},
  \bibinfo{journal}{Phys. Rev. Lett.} \textbf{\bibinfo{volume}{89}},
  \bibinfo{pages}{027901} (\bibinfo{year}{2002}).

\bibitem[{\citenamefont{{Brand\~{a}o} and Plenio}(2008)}]{brandao_2008}
\bibinfo{author}{\bibfnamefont{F.~G. S.~L.} \bibnamefont{{Brand\~{a}o}}}
  \bibnamefont{and} \bibinfo{author}{\bibfnamefont{M.~B.}
  \bibnamefont{Plenio}}, \bibinfo{journal}{Nature Phys.}
  \textbf{\bibinfo{volume}{4}}, \bibinfo{pages}{873} (\bibinfo{year}{2008}).

\bibitem[{\citenamefont{Piani}(2009)}]{piani_2009}
\bibinfo{author}{\bibfnamefont{M.}~\bibnamefont{Piani}},
  \bibinfo{journal}{Phys. Rev. Lett.} \textbf{\bibinfo{volume}{103}},
  \bibinfo{pages}{160504} (\bibinfo{year}{2009}).

\bibitem[{\citenamefont{Christandl et~al.}(2012)\citenamefont{Christandl,
  Schuch, and Winter}}]{christandl_2012}
\bibinfo{author}{\bibfnamefont{M.}~\bibnamefont{Christandl}},
  \bibinfo{author}{\bibfnamefont{N.}~\bibnamefont{Schuch}}, \bibnamefont{and}
  \bibinfo{author}{\bibfnamefont{A.}~\bibnamefont{Winter}},
  \bibinfo{journal}{Commun. Math. Phys.} \textbf{\bibinfo{volume}{311}},
  \bibinfo{pages}{397} (\bibinfo{year}{2012}).

\bibitem[{\citenamefont{Aaronson et~al.}(2008)\citenamefont{Aaronson, Beigi,
  Drucker, Fefferman, and Shor}}]{aaronson_2009}
\bibinfo{author}{\bibfnamefont{S.}~\bibnamefont{Aaronson}},
  \bibinfo{author}{\bibfnamefont{S.}~\bibnamefont{Beigi}},
  \bibinfo{author}{\bibfnamefont{A.}~\bibnamefont{Drucker}},
  \bibinfo{author}{\bibfnamefont{B.}~\bibnamefont{Fefferman}},
  \bibnamefont{and} \bibinfo{author}{\bibfnamefont{P.}~\bibnamefont{Shor}}, in
  \emph{\bibinfo{booktitle}{CCC'08. 23rd Annual IEEE Conference on
  Computational Complexity}} (\bibinfo{organization}{IEEE},
  \bibinfo{year}{2008}), p. \bibinfo{pages}{223}.

\bibitem[{\citenamefont{Hastings}(2009)}]{hastingsadditivity}
\bibinfo{author}{\bibfnamefont{M.~B.} \bibnamefont{Hastings}},
  \bibinfo{journal}{Nature Phys.} \textbf{\bibinfo{volume}{5}},
  \bibinfo{pages}{255} (\bibinfo{year}{2009}).

\bibitem[{\citenamefont{Vollbrecht and Werner}(2001)}]{vollbrecht_2001}
\bibinfo{author}{\bibfnamefont{K.~G.~H.} \bibnamefont{Vollbrecht}}
  \bibnamefont{and} \bibinfo{author}{\bibfnamefont{R.~F.}
  \bibnamefont{Werner}}, \bibinfo{journal}{Phys. Rev. A}
  \textbf{\bibinfo{volume}{64}}, \bibinfo{pages}{062307}
  (\bibinfo{year}{2001}).

\bibitem[{epa()}]{epaps}
\bibinfo{note}{See Supplemental Material at [EPAPS] for technical proofs, which
  includes Refs.~\cite{ZySo,Levy,Milman,ASW,Pisier}.}

\bibitem[{\citenamefont{Sommers and \.{Z}yczkowski}(2001)}]{ZySo}
\bibinfo{author}{\bibfnamefont{H.}~\bibnamefont{Sommers}} \bibnamefont{and}
  \bibinfo{author}{\bibfnamefont{K.}~\bibnamefont{\.{Z}yczkowski}},
  \bibinfo{journal}{J. Phys. A.} \textbf{\bibinfo{volume}{34}},
  \bibinfo{pages}{7111} (\bibinfo{year}{2001}).

\bibitem[{\citenamefont{L\'{e}vy}(1951)}]{Levy}
\bibinfo{author}{\bibfnamefont{P.}~\bibnamefont{L\'{e}vy}},
  \emph{\bibinfo{title}{Probl\`{e}mes concrets d'analyse fonctionnelle}}
  (\bibinfo{publisher}{Gauthier-Villars, Paris}, \bibinfo{year}{1951}).

\bibitem[{\citenamefont{Milman}(1971)}]{Milman}
\bibinfo{author}{\bibfnamefont{V.~D.} \bibnamefont{Milman}},
  \bibinfo{journal}{Funct. Annal. Appl.} \textbf{\bibinfo{volume}{5}},
  \bibinfo{pages}{28} (\bibinfo{year}{1971}).

\bibitem[{\citenamefont{Aubrun et~al.}(2011)\citenamefont{Aubrun, Szarek, and
  Werner}}]{ASW}
\bibinfo{author}{\bibfnamefont{G.}~\bibnamefont{Aubrun}},
  \bibinfo{author}{\bibfnamefont{S.}~\bibnamefont{Szarek}}, \bibnamefont{and}
  \bibinfo{author}{\bibfnamefont{E.}~\bibnamefont{Werner}},
  \bibinfo{journal}{Commun. Math. Phys.} \textbf{\bibinfo{volume}{305}},
  \bibinfo{pages}{85} (\bibinfo{year}{2011}).

\bibitem[{\citenamefont{Pisier}(1989)}]{Pisier}
\bibinfo{author}{\bibfnamefont{G.}~\bibnamefont{Pisier}},
  \emph{\bibinfo{title}{The Volume of Convex Bodies and Banach Spaces Geometry,
  {\rm Cambridge Tracts in Mathematics}}}, vol.~\bibinfo{volume}{94}
  (\bibinfo{publisher}{Cambridge University Press, Cambridge},
  \bibinfo{year}{1989}).

\bibitem[{\citenamefont{{Brand\~{a}o} et~al.}(2011)\citenamefont{{Brand\~{a}o},
  Christandl, and Yard}}]{brandao_2011}
\bibinfo{author}{\bibfnamefont{F.~G. S.~L.} \bibnamefont{{Brand\~{a}o}}},
  \bibinfo{author}{\bibfnamefont{M.}~\bibnamefont{Christandl}},
  \bibnamefont{and} \bibinfo{author}{\bibfnamefont{J.~T.} \bibnamefont{Yard}},
  \bibinfo{journal}{Commun. Math. Phys.} \textbf{\bibinfo{volume}{306}},
  \bibinfo{pages}{805} (\bibinfo{year}{2011}).

\bibitem[{\citenamefont{Matthews et~al.}(2009)\citenamefont{Matthews, Wehner,
  and Winter}}]{matthews_2009}
\bibinfo{author}{\bibfnamefont{W.}~\bibnamefont{Matthews}},
  \bibinfo{author}{\bibfnamefont{S.}~\bibnamefont{Wehner}}, \bibnamefont{and}
  \bibinfo{author}{\bibfnamefont{A.}~\bibnamefont{Winter}},
  \bibinfo{journal}{Commun. Math. Phys.} \textbf{\bibinfo{volume}{291}},
  \bibinfo{pages}{813} (\bibinfo{year}{2009}).

\bibitem[{\citenamefont{Winter}(2016)}]{winter_2015}
\bibinfo{author}{\bibfnamefont{A.}~\bibnamefont{Winter}},
  \bibinfo{journal}{Commun. Math. Phys.} p.~\bibinfo{pages}{1}
  (\bibinfo{year}{2016}), ISSN \bibinfo{issn}{1432-0916}.

\bibitem[{\citenamefont{Coleman and Yukalov}(2000)}]{marginalp}
\bibinfo{author}{\bibfnamefont{A.}~\bibnamefont{Coleman}} \bibnamefont{and}
  \bibinfo{author}{\bibfnamefont{V.}~\bibnamefont{Yukalov}},
  \emph{\bibinfo{title}{Reduced Density Matrices, {\em Lecture Notes in
  Chemistry Vol.~72}}} (\bibinfo{publisher}{Springer-Verlag},
  \bibinfo{address}{Berlin}, \bibinfo{year}{2000}).

\bibitem[{\citenamefont{Lieb and Ruskai}(1973)}]{Lieb}
\bibinfo{author}{\bibfnamefont{E.~H.} \bibnamefont{Lieb}} \bibnamefont{and}
  \bibinfo{author}{\bibfnamefont{M.~B.} \bibnamefont{Ruskai}},
  \bibinfo{journal}{J. Math. Phys.} \textbf{\bibinfo{volume}{14}},
  \bibinfo{pages}{1938} (\bibinfo{year}{1973}).

\bibitem[{\citenamefont{Linden and Winter}(2005)}]{Linden}
\bibinfo{author}{\bibfnamefont{N.}~\bibnamefont{Linden}} \bibnamefont{and}
  \bibinfo{author}{\bibfnamefont{A.}~\bibnamefont{Winter}},
  \bibinfo{journal}{Commun. Math. Phys.} \textbf{\bibinfo{volume}{259}},
  \bibinfo{pages}{129} (\bibinfo{year}{2005}), ISSN \bibinfo{issn}{1432-0916}.

\bibitem[{\citenamefont{Cadney et~al.}(2014)\citenamefont{Cadney, Huber,
  Linden, and Winter}}]{Cadney}
\bibinfo{author}{\bibfnamefont{J.}~\bibnamefont{Cadney}},
  \bibinfo{author}{\bibfnamefont{M.}~\bibnamefont{Huber}},
  \bibinfo{author}{\bibfnamefont{N.}~\bibnamefont{Linden}}, \bibnamefont{and}
  \bibinfo{author}{\bibfnamefont{A.}~\bibnamefont{Winter}},
  \bibinfo{journal}{Linear Algebra and its Applications}
  \textbf{\bibinfo{volume}{452}}, \bibinfo{pages}{153} (\bibinfo{year}{2014}),
  ISSN \bibinfo{issn}{0024-3795}.

\bibitem[{\citenamefont{Eltschka and Siewert}(2015)}]{Siewert}
\bibinfo{author}{\bibfnamefont{C.}~\bibnamefont{Eltschka}} \bibnamefont{and}
  \bibinfo{author}{\bibfnamefont{J.}~\bibnamefont{Siewert}},
  \bibinfo{journal}{Phys. Rev. Lett.} \textbf{\bibinfo{volume}{114}},
  \bibinfo{pages}{140402} (\bibinfo{year}{2015}).

\bibitem[{\citenamefont{Almheiri et~al.}(2013)\citenamefont{Almheiri, Marolf,
  Polchinski, and Sully}}]{amps_2013}
\bibinfo{author}{\bibfnamefont{A.}~\bibnamefont{Almheiri}},
  \bibinfo{author}{\bibfnamefont{D.}~\bibnamefont{Marolf}},
  \bibinfo{author}{\bibfnamefont{J.}~\bibnamefont{Polchinski}},
  \bibnamefont{and} \bibinfo{author}{\bibfnamefont{J.}~\bibnamefont{Sully}},
  \bibinfo{journal}{JHEP} \textbf{\bibinfo{volume}{2013}}, \bibinfo{pages}{1}
  (\bibinfo{year}{2013}).

\end{thebibliography}


\clearpage

\appendix

\widetext
\begin{center} \large{\textbf{Supplemental Material}} \end{center}

\begin{center} \large{\textbf{Should Entanglement Measures be Monogamous or Faithful?}} \end{center}

\section{Entanglement of formation and relative entropy of entanglement of random states}
\label{ap:E_F-E_R}
\setcounter{equation}{0}
Let us first fix a few definitions and notations.
For any $n\in\N$ and $1\leq p\leq \infty$, we denote by $\|\cdot\|_p$ the Schatten $p$-norm on the set of Hermitian operators on $\C^n$. Particular instances of interest are the trace-class norm $\|\cdot\|_1$, the Hilbert--Schmidt norm $\|\cdot\|_2$ and the operator norm $\|\cdot\|_{\infty}$.

Given $n,s\in\N$, we say that a random mixed state $\rho$ on $\C^n$ is induced by $\C^s$, which we denote by $\rho\sim\mu_{n,s}$, if $\rho=\tr_{\C^s}\ket{\psi}\bra{\psi}$ for $\ket{\psi}$ a uniformly distributed random pure state on the global ``system + ancilla'' space $\C^n\otimes\C^s$. It is known that in the case $s\leq n$, $\rho\sim\mu_{n,s}$ is equivalent to $\rho$ being uniformly distributed (for the measure induced by the Hilbert-Schmidt distance) on the set of mixed states of rank at most $s$ on $\C^n$ (see e.g. \cite{ZySo}). In the sequel, we shall use the following shorthand notation: for any pure state $\psi$ on $\C^n\otimes\C^s$, $\rho_{\psi}$ stands for the corresponding marginal state on $\C^n$, i.e. $\rho_{\psi}=\tr_{\C^s}\ket{\psi}\bra{\psi}$.

In the following we will always work on the bipartite Hilbert space $\C^d\otimes\C^d$, where we denote by $\cD(\C^d\otimes\C^d)$ the set of all states and by $\cS(\C^d:\C^d)$ the set of states which are separable across the cut $\C^d:\C^d$. For any $\rho\in\cD(\C^d\otimes\C^d)$, its entanglement of formation is defined as
\[ E_F(\rho) = \inf\left\{ \sum_i p_iS(\rho_{\varphi_i}) \st \rho=\sum_i p_i\ketbra{\varphi_i}{\varphi_i},\ p_i\geq 0,\ \sum_ip_i=1,\ \ket{\varphi_i}\in \C^d\otimes\C^d \right\}, \]
and its relative entropy of entanglement as
\[ E_R(\rho) = \inf\left\{ S(\rho\|\sigma)\st \sigma\in\cS(\C^d:\C^d)\right\} = \inf\left\{ \tr\left(\rho\big(\log\rho-\log\sigma\big)\right) \st \sigma\in\cS(\C^d:\C^d) \right\}.\]
We are interested in determining, given $d,s\in\N$, what is the typical value of $E_F(\rho)$ and $E_R(\rho)$ for $\rho\sim\mu_{d^2,s}$. The answer is stated in Theorems \ref{th:main'} and \ref{th:main} below. They both rely on technical lemmas appearing in Section \ref{sec:technical}. Theorem \ref{th:main'} is then obtained as a direct corollary of Proposition \ref{prop:subspace}. While Theorem \ref{th:main} follows from the upper bound of Proposition \ref{prop:upper bound} and the lower bound of Proposition \ref{prop:lower bound}. Let us also emphasize that the statement about the typical value of $E_F$, appearing in Theorem \ref{th:main'}, was already more or less established in \cite[Thm.~V.1]{hayden_2006}, but only the lower bound appears there and with a not as tight dependence in the parameters, which is why we briefly repeat the whole argument here. Oppositely, the statement about the typical value of $E_R$, appearing in Theorem \ref{th:main}, is new.

\begin{theorem} \label{th:main'}
Fix $t>0$. Let $\rho$ be a random state on $\C^d\otimes\C^d$ induced by some environment $\C^s$, with $s\leq Cd^2t^2/\log^2 d$ for some universal constant $C>0$. Then,
\begin{equation} \label{eq:main'} \P\left( \left| E_F(\rho) - \left[\log d - \frac{1}{2\ln 2}\right] \right| \leq t \right) \geq 1- e^{-cd^2t^2/\log^2 d}, \end{equation}
where $c>0$ is a universal constant.
\end{theorem}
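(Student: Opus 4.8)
The plan is to reduce the statement to a uniform concentration estimate for the entanglement entropy over the support of $\rho$, and then to sandwich $E_F(\rho)$ between the extremal values of that entropy. Since $s \leq Cd^2t^2/\log^2 d \leq d^2$, a state $\rho\sim\mu_{d^2,s}$ is uniformly distributed on rank-$\leq s$ states, so its support is a Haar-random $s$-dimensional subspace $V\subseteq\C^d\otimes\C^d$. Two elementary observations then apply: (i) any convex-roof decomposition $\rho=\sum_i p_i\ketbra{\varphi_i}{\varphi_i}$ uses vectors $\ket{\varphi_i}\in\mathrm{supp}(\rho)=V$, whence $E_F(\rho)\geq \min_{\ket{\varphi}\in V}S(\rho_\varphi)$; and (ii) the spectral decomposition of $\rho$ is itself such a decomposition with eigenvectors in $V$, whence $E_F(\rho)\leq \max_{\ket{\varphi}\in V}S(\rho_\varphi)$. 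It therefore suffices to prove that, with the claimed probability, every unit vector $\ket{\varphi}\in V$ satisfies $|S(\rho_\varphi)-(\log d-\tfrac{1}{2\ln 2})|\leq t$; this is the content of Proposition \ref{prop:subspace}.

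To prove that uniform bound I would argue by concentration of measure followed by a net argument. First, for a single Haar-random unit vector $\ket{\varphi}$ on $\C^d\otimes\C^d$, Page's formula gives $\EE\,S(\rho_\varphi)=\log d-\tfrac{1}{2\ln 2}+o(1)$, and L\'evy's lemma on the sphere $S^{2d^2-1}$ yields the deviation bound $\P\big(|S(\rho_\varphi)-\EE\,S(\rho_\varphi)|>t/2\big)\leq 2e^{-c'd^2t^2/L^2}$, where $L$ is the Lipschitz constant of $\ket{\varphi}\mapsto S(\rho_\varphi)$. The key quantitative input is $L=O(\log d)$: this follows by combining the contraction estimate $\|\rho_\varphi-\rho_{\varphi'}\|_1\leq 2\|\ket{\varphi}-\ket{\varphi'}\|$ with a Fannes-type continuity bound $|S(\rho_\varphi)-S(\rho_{\varphi'})|\lesssim \|\rho_\varphi-\rho_{\varphi'}\|_1\log d$, and it is exactly this $\log d$ that produces the exponent $d^2t^2/\log^2 d$.

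Second, to upgrade this single-vector statement to a uniform one over all of $V$, I would fix an $\epsilon$-net $\mathcal{N}$ of the unit sphere of $V$ with $\epsilon\sim t/\log d$; since $\dim_\R V=2s$, one can take $|\mathcal{N}|\leq (C/\epsilon)^{2s}$. A union bound gives that $S(\rho_\varphi)$ lies within $t/2$ of its mean simultaneously for all $\ket{\varphi}\in\mathcal{N}$, except with probability $|\mathcal{N}|\,e^{-c'd^2t^2/\log^2 d}$, and the $O(\log d)$-Lipschitz property transfers the estimate from the net to every vector of $V$ at the cost of the remaining $t/2$. For the union bound to leave a failure probability of the claimed form $e^{-cd^2t^2/\log^2 d}$, one needs $\log|\mathcal{N}|\sim s\log(\log d/t)$ to be dominated by the concentration exponent, which is precisely the origin of the admissible range $s\leq Cd^2t^2/\log^2 d$.

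The main obstacle is the uniform-over-$V$ estimate of the previous paragraph: the single-vector Page/L\'evy bound is standard, but controlling both the minimum and the maximum of $S(\rho_\varphi)$ over an entire random subspace forces the net cardinality, the $O(\log d)$ Lipschitz constant, and the concentration rate to be balanced tightly enough that the permitted dimension grows like $d^2t^2/\log^2 d$. Pinning down the correct power of $\log d$ here, rather than settling for a cruder bound, is what makes the dependence on the parameters sharp and is the crux of the argument.
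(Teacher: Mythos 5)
Your proposal follows essentially the same route as the paper: the reduction of $E_F(\rho)$ to a uniform two-sided bound on $S(\rho_\varphi)$ over the random support subspace is exactly the paper's Proposition \ref{prop:subspace}, and your L\'evy-plus-net argument is precisely the standard proof of the Dvoretzky-type statement (Lemma \ref{lemma:dvo}) that the paper cites as a black box. The only caveat is quantitative: a naive $\epsilon$-net union bound costs an extra factor $\log(\log d/t)$ in the admissible range of $s$ compared to the sharp form $s\leq Cd^2t^2/\log^2 d$ obtained from the cited version of Dvoretzky's theorem, though this loss is harmless for the application in the paper where $s\sim\log d$.
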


\begin{theorem} \label{th:main}
Fix $0<t<1$. Let $\rho$ be a random state on $\C^d\otimes\C^d$ induced by some environment $\C^s$, with $Cd\log(1/t)/t^2\leq s\leq d^2$ for some universal constant $C>0$. Then,
\begin{equation} \label{eq:main} \P\left( \left| E_R(\rho) - \left[2\log d - \log s + \frac{1}{2\ln 2}\frac{s}{d^2}\right] \right| \leq t \right) \geq 1- e^{-cst^2}, \end{equation}
where $c>0$ is a universal constant.
\end{theorem}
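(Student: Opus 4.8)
\emph{Proof plan.} The plan is to work from the variational rewriting
\[ E_R(\rho) = -S(\rho) - \sup_{\sigma\in\cS(\C^d:\C^d)}\tr[\rho\log\sigma], \]
obtained by expanding $S(\rho\|\sigma)=\tr[\rho\log\rho]-\tr[\rho\log\sigma]$ and using $\tr[\rho\log\rho]=-S(\rho)$. The task thus decouples into (i) locating the entropy $S(\rho)$ and (ii) controlling the optimal separable log-overlap $M:=\sup_{\sigma\in\cS}\tr[\rho\log\sigma]$, the claimed centring value being exactly $-S(\rho)-M$ once $S(\rho)\approx\log s-\frac{1}{2\ln2}\frac{s}{d^2}$ and $M\approx -2\log d$. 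For (i) I would rely on entropy concentration: Page's formula gives $\E[S(\rho)]=\log s-\frac{1}{2\ln2}\frac{s}{d^2}+o(1)$, and Lévy's lemma applied to the Lipschitz map $\psi\mapsto S(\rho_\psi)$ on the unit sphere of $\C^{d^2}\otimes\C^s$ yields concentration with an exponent of order $d^2 s\,t^2/\log^2 d$, comfortably stronger than the target $e^{-cst^2}$; this reproduces the two summands $-\log s+\frac{1}{2\ln2}\frac{s}{d^2}$.

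The \textbf{upper bound} on $E_R$ is the easy half. Testing the separable witness $\sigma=\I/d^2$ gives $M\geq\tr[\rho\log(\I/d^2)]=-2\log d$, hence $E_R(\rho)\leq 2\log d-S(\rho)$, and the lower tail of $S(\rho)$ from (i) furnishes the upper estimate in \eqref{eq:main}.

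The \textbf{lower bound} is where the work concentrates, and it amounts to showing $\I/d^2$ is essentially optimal, i.e.\ $M\leq -2\log d+O(t)$. The pivotal structural step is a double use of Jensen's inequality: writing $\rho=\sum_i\mu_i\ketbra{v_i}{v_i}$, concavity of $\log$ gives first $\bra{v_i}\log\sigma\ket{v_i}\leq\log\bra{v_i}\sigma\ket{v_i}$ and then $\sum_i\mu_i\log\bra{v_i}\sigma\ket{v_i}\leq\log\sum_i\mu_i\bra{v_i}\sigma\ket{v_i}$, so that for every state $\sigma$,
\[ \tr[\rho\log\sigma]\leq\log\tr[\rho\sigma]. \]
As $\tr[\rho\sigma]$ is linear in $\sigma$, its supremum over the convex set of separable states is attained at a pure product state, whence $M\leq\log\big(\max_{\ket a,\ket b}\bra a\bra b\,\rho\,\ket a\ket b\big)$. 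Everything therefore reduces to the maximal product overlap, the goal being $\max_{\ket a,\ket b}\bra a\bra b\rho\ket a\ket b\leq\frac{1}{d^2}(1+O(t))$ with probability at least $1-e^{-cst^2}$.

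This last estimate is the main obstacle, and I would establish it by a net-plus-concentration argument. For a \emph{fixed} product vector $\ket w=\ket a\otimes\ket b$, writing $\rho=GG^\dagger/\tr(GG^\dagger)$ with $G$ a $d^2\times s$ Ginibre matrix, the scalar $\bra w\rho\ket w$ is a ratio of (half) chi-square variables whose numerator carries $s$ degrees of freedom and whose mean is $\E[\bra w\rho\ket w]=1/d^2$; chi-square tails give a relative deviation bound $\P[\bra w\rho\ket w>\tfrac{1+\delta}{d^2}]\leq e^{-cs\delta^2}$, the $s$-dimensional numerator being the bottleneck. One then unions over an $\epsilon$-net of the product states, of cardinality $(C/\epsilon)^{O(d)}$, transferring to all product states via the Lipschitz constant $\sim\|\rho\|_\infty\sim 1/s$ of $\ket w\mapsto\bra w\rho\ket w$. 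Because $\rho$ is spiky relative to its mean (its spectral spread $\|\rho\|_\infty$ exceeds $1/d^2$ by a factor $\sim d^2/s$), the net must be refined accordingly, and balancing its cardinality against the deviation exponent $cs\delta^2$ with $\delta\sim t$ is exactly what forces $s$ above the threshold $s\geq Cd\log(1/t)/t^2$ of the hypothesis and yields the probability $1-e^{-cst^2}$. I expect this balancing to be the delicate step: the first moment is pinned at $1/d^2$, but the supremum ranges over a continuum whose metric entropy scales with $d$, so the fluctuations must be suppressed below it, and obtaining the sharp logarithmic dependence $\log(1/t)$ rather than a cruder $\log d$ is the technically demanding part. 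The threshold on $s$ is moreover genuinely necessary---for small $s$, say near-pure $\rho$, the maximal product overlap is truly much larger than $1/d^2$ and the centring value fails. Combining the overlap bound with the entropy concentration then gives both tails of $E_R(\rho)$ about $2\log d-\log s+\frac{1}{2\ln2}\frac{s}{d^2}$, with the common probability $1-e^{-cst^2}$ dictated by the overlap estimate.
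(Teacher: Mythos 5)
Your plan follows essentially the same route as the paper: the decomposition $E_R(\rho)=\min_\sigma\{-\tr(\rho\log\sigma)\}-S(\rho)$, the upper bound via $\sigma=\I/d^2$ plus entropy concentration (Page's formula and Levy's lemma), and the lower bound via the double Jensen step $-\tr(\rho\log\sigma)\geq-\log\tr(\rho\sigma)$, reduction to the maximal product overlap, and a net-plus-concentration argument with exactly the threshold $s\geq Cd\log(1/t)/t^2$. The one step you rightly flag as delicate---passing from the net to all product states without paying a $\log d$---is handled in the paper not through the Lipschitz constant $\|\rho\|_\infty$ but by the multiplicative self-comparison $M_{\delta}(\rho)\geq(1-4\delta)M(\rho)$ (obtained by expanding $\bra{x\otimes y}\rho\ket{x\otimes y}$ around the nearest net point), which permits a net of resolution $\delta\sim t$ independent of the spikiness of $\rho$ and yields the $\log(1/t)$ dependence.
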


\subsection{A few technical lemmas}
\label{sec:technical}

The two crucial tools that we will use in order to obtain Theorems \ref{th:main'} and \ref{th:main} are Levy's Lemma and Dvoretzky's Theorem (the former being at the heart of the derivation of the latter). These guarantee that, in high dimension, regular enough functions typically do not deviate much from their average behaviour. The precise version of this general paradigm that we will need are quoted in the following Lemmas \ref{lemma:levy} and \ref{lemma:dvo}.

\begin{lemma}[Levy's Lemma for Lipschitz functions on the sphere \cite{Levy}] \label{lemma:levy}
Let $n\in\N$. For any $L$-Lipschitz function $f:S_{\C^n}\rightarrow\R$ and any $t>0$, if $\psi$ is uniformly distributed on $S_{\C^n}$, then
\[ \P( | f(\psi) - \EE f| > t ) \leq e^{-cn t^2/L^2}, \]
where $c>0$ is a universal constant.
\end{lemma}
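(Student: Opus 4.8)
The plan is to prove the concentration estimate first around a \emph{median} of $f$ using the spherical isoperimetric inequality, and then to transfer the bound to the mean $\EE f$. First I would reduce to the case $L=1$: if $f$ is $L$-Lipschitz then $g=f/L$ is $1$-Lipschitz, and since $\P(|f-\EE f|>t)=\P(|g-\EE g|>t/L)$, a bound $e^{-cn(t/L)^2}$ for $g$ is exactly the claimed $e^{-cnt^2/L^2}$ for $f$. I also identify $S_{\C^n}$ with the real Euclidean sphere $S^{2n-1}\subset\R^{2n}$ endowed with its normalized rotation-invariant measure $\sigma$, and work with the geodesic distance; since the geodesic distance dominates the chordal one, $f$ remains $1$-Lipschitz in this metric, so this costs nothing.

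Let $M$ be a median of $f$, so that $A_-=\{f\le M\}$ and $A_+=\{f\ge M\}$ both have measure at least $1/2$. The key point is a neighborhood containment coming from the Lipschitz property: if $x$ lies within geodesic distance $t$ of $A_-$ then $f(x)\le M+t$, so $\{f>M+t\}\subseteq S^{2n-1}\setminus(A_-)_t$, and symmetrically $\{f<M-t\}\subseteq S^{2n-1}\setminus(A_+)_t$, where $(A)_t$ denotes the $t$-enlargement. The decisive input is then L\'evy's spherical isoperimetric inequality \cite{Levy}: among all measurable sets of given measure, geodesic caps minimize the measure of the complement of the $t$-enlargement. Comparing $A_\pm$ (each of measure $\ge 1/2$) with a half-sphere and estimating the enlargement of the extremal cap gives a Gaussian tail $\sigma\big((A_\pm)_t^c\big)\le\sqrt{\pi/8}\,e^{-(n-1)t^2}$. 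Summing the two one-sided estimates yields $\P(|f-M|>t)\le\sqrt{\pi/2}\,e^{-(n-1)t^2}$, i.e.\ concentration around the median with the right exponent.

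It remains to pass from $M$ to $\EE f$. Integrating the median tail bound via the layer-cake formula gives $|\EE f-M|\le\EE|f-M|=\int_0^\infty\P(|f-M|>s)\,ds=O(1/\sqrt n)$, so the median and the mean differ by at most a universal multiple of $L/\sqrt n$ (reinstating $L$). Substituting $\EE f$ for $M$ and shrinking the constant in the exponent to absorb this shift then converts concentration around $M$ into $\P(|f-\EE f|>t)\le e^{-c'n t^2/L^2}$ for a universal $c'>0$, which is the claim.

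The genuinely hard ingredient is the spherical isoperimetric inequality itself --- the extremality of caps, the classical theorem of L\'evy and Schmidt; everything else is routine bookkeeping. If one prefers to avoid invoking the sharp isoperimetry, an alternative route is to establish the optimal logarithmic Sobolev inequality for $\sigma$ on $S^{2n-1}$ (with constant of order $n$, obtained from the Bakry--\'Emery curvature criterion for the round sphere) and then apply Herbst's argument; this yields the sub-Gaussian concentration directly around $\EE f$ with no median-to-mean transfer, at the cost of importing the log-Sobolev machinery.
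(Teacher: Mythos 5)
The paper does not prove this lemma at all: it is quoted as a classical result with a citation to \cite{Levy}, and is used as a black box. Your proof is the standard textbook derivation (L\'evy--Schmidt isoperimetry on $S^{2n-1}$, the Lipschitz neighborhood containment $\{f>M+t\}\subseteq((A_-)_t)^c$, Gaussian decay of the complement of the enlarged half-sphere, then the median-to-mean transfer), and it is correct; the bookkeeping also comes out right, since identifying $S_{\C^n}$ with $S^{2n-1}\subset\R^{2n}$ turns the usual exponent $(d-2)t^2/2$ for $S^{d-1}$ into $(n-1)t^2$, matching the claimed $cnt^2/L^2$. The only step a careful reader might want spelled out is the final absorption of the multiplicative prefactor $\sqrt{\pi/2}$ and of the $O(L/\sqrt n)$ median-to-mean shift into the universal constant $c$ with no prefactor left in front of the exponential; this is the standard (and in this literature universally accepted) normalization, and in any case the paper only invokes the lemma in regimes where $nt^2/L^2$ is large, so nothing is at stake. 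Your alternative route via the log-Sobolev inequality and Herbst's argument is equally valid and avoids the median entirely, at the cost of heavier machinery.
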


\begin{lemma}[Dvoretzky's Theorem for Lipschitz functions on the sphere \cite{Milman,ASW}] \label{lemma:dvo}
Let $n\in\N$. For any circled $L$-Lipschitz function $f:S_{\C^n}\rightarrow\R$ and any $t>0$, if $H$ is a uniformly distributed $Cnt^2/L^2$-dimensional subspace of $\C^n$, with $C>0$ a universal constant, then
\[ \P\left(\exists\ \psi\in H\cap S_{\C^n}:\ | f(\psi) - \EE f| > t \right) \leq e^{-cn t^2/L^2}, \]
where $c>0$ is a universal constant.
\end{lemma}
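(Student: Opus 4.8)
The plan is to reduce the uniform-over-$H$ statement to a pointwise concentration estimate supplied by Levy's Lemma, and then to glue the pointwise estimates together with a net and a union bound. First I would realize the random subspace as $H=UH_0$, where $H_0\subset\C^n$ is a fixed reference subspace of dimension $k=Cnt^2/L^2$ and $U$ is Haar-distributed on the unitary group; this is legitimate because the Haar measure on $U$ pushes forward to the uniform measure on $k$-dimensional subspaces. The key observation is that for any \emph{fixed} $\phi\in S_{\C^n}$ the vector $U\phi$ is uniformly distributed on $S_{\C^n}$, so Lemma~\ref{lemma:levy} applies to the map $U\mapsto f(U\phi)$ and yields $\P\big(|f(U\phi)-\EE f|>t/2\big)\leq e^{-cnt^2/(4L^2)}$.

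Next I would discretize. Writing $S_0=H_0\cap S_{\C^n}$ for the (real $(2k-1)$-dimensional) unit sphere of $H_0$, I choose an $\e$-net $\mathcal N\subset S_0$ at scale $\e=t/(2L)$, of cardinality $|\mathcal N|\leq (6L/t)^{2k}$. A union bound of the pointwise estimate over $\mathcal N$ shows that, with probability at least $1-(6L/t)^{2k}e^{-cnt^2/(4L^2)}$, one has $|f(U\phi')-\EE f|\leq t/2$ simultaneously for all $\phi'\in\mathcal N$. On this event I interpolate: any $\psi\in H\cap S_{\C^n}$ equals $U\phi$ for some $\phi\in S_0$; picking $\phi'\in\mathcal N$ with $\|\phi-\phi'\|\leq\e$ and using that $U$ is an isometry together with the $L$-Lipschitz property of $f$ gives $|f(U\phi)-f(U\phi')|\leq L\e=t/2$, hence $|f(\psi)-\EE f|\leq t$. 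Here the circledness of $f$, i.e.\ its invariance under $\psi\mapsto e^{i\theta}\psi$, is the complex counterpart of the evenness hypothesis in the classical Dvoretzky theorem: in this single-scale argument it serves only to let $f$ descend to $\mathbb{CP}^{k-1}$ and thereby trim the net, but it plays a more substantive role in the sharp form of the references.

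Finally I would balance the two competing exponents. The failure probability is governed by $2k\ln(6L/t)-cnt^2/(4L^2)$, so taking $k$ equal to a sufficiently small universal constant times $nt^2/L^2$ leaves a net exponent $e^{-c'nt^2/L^2}$, which is the claimed bound.

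The main obstacle is precisely this last balancing. The net contributes $\ln|\mathcal N|\approx 2k\ln(L/t)$, so a single-scale net only licenses dimension $k\lesssim nt^2/\big(L^2\ln(L/t)\big)$, falling short of the advertised $k=Cnt^2/L^2$ by a logarithmic factor. Removing this $\ln(L/t)$ loss, i.e.\ reaching the sharp Dvoretzky dimension of \cite{Milman,ASW}, is the genuinely delicate point, and I would handle it not with one net but by a chaining / successive-refinement argument over a dyadic sequence of nets (or, alternatively, by invoking Gordon's Gaussian min--max comparison), summing the Levy estimates across scales so that the fine-scale increments of $f$ are paid for by the correspondingly small distances rather than by a fresh union bound at the finest scale.
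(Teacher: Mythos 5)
The paper does not prove this lemma: it is imported verbatim from the cited references (Milman--Schechtman and Aubrun--Szarek--Werner), so there is no internal proof to compare yours against. Judged on its own terms, your reduction is sound: realizing $H=UH_0$ with $U$ Haar-distributed, applying Lemma~\ref{lemma:levy} to the uniformly distributed point $U\phi$ for each fixed $\phi$, and passing to all of $H\cap S_{\C^n}$ via an $\e$-net at scale $t/(2L)$ with the Lipschitz interpolation $|f(U\phi)-f(U\phi')|\leq L\|\phi-\phi'\|$ is exactly the standard argument, and your bookkeeping (net cardinality $(6L/t)^{2k}$, exponent $2k\ln(6L/t)-cnt^2/(4L^2)$) is correct. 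You are also right, and commendably explicit, that this single-scale argument only certifies subspaces of dimension $k\lesssim nt^2/\bigl(L^2\ln(L/t)\bigr)$, a logarithm short of the stated $Cnt^2/L^2$.

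The gap is in the repair. Gordon's min--max comparison is a statement about the specific Gaussian processes attached to norms (bilinear forms $\langle Gx,y\rangle$); it does not apply to an arbitrary circled Lipschitz $f$. Likewise, a Dudley/chaining argument over dyadic nets needs the increments $f(U\phi)-f(U\phi')$ to be subgaussian in $U$ with parameter of order $L\|\phi-\phi'\|/\sqrt{n}$; the Lipschitz hypothesis only gives the deterministic bound $L\|\phi-\phi'\|$ and, via concentration on $\mathcal{U}(n)$, a subgaussian parameter of order $L/\sqrt{n}$ \emph{independent} of $\|\phi-\phi'\|$, so the scales do not pay for themselves and the chaining sum does not close. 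Removing the logarithm genuinely requires the more structured arguments of the cited references (e.g.\ treating $E\mapsto\sup_{S_E}f$ and $E\mapsto\inf_{S_E}f$ as Lipschitz functions on the Grassmannian and controlling their means, where circledness enters substantively), and simply naming chaining does not supply this. That said, the loss is harmless for everything this lemma is used for here: Lemma~\ref{lemma:S'} and hence Theorem~\ref{th:main'} would survive with the admissible environment dimension $s$ shrunk by a factor $\log\log d$, and the main text applies the result with $s\sim\log d$, far below either threshold. So your self-contained argument already proves a version of the lemma sufficient for the paper, but it does not prove the lemma as stated.
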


For the sake of completeness, we now re-derive (more or less well-known) concentration results for two functions that will pop up later on while establishing Theorems \ref{th:main'} and \ref{th:main}.

\begin{lemma} \label{lemma:S}
Fix $n,s\in\N$ with $s\leq n$ and $t>0$. If $\psi$ is uniformly distributed on $S_{\C^n\otimes\C^s}$, then
\[ \P\left( \left|S(\rho_{\psi}) - \log s + \frac{1}{2\ln 2}\frac{s}{n} \right| > t \right) \leq e^{-cnst^2/\log^2n}, \]
where $c>0$ is a universal constant.
\end{lemma}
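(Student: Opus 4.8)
The plan is to treat $f:\psi\mapsto S(\rho_\psi)$ as a Lipschitz function on the sphere $S_{\C^n\otimes\C^s}$ and apply Levy's Lemma (Lemma~\ref{lemma:levy}). This requires two ingredients: a bound on the Lipschitz constant $L$ of $f$, and a computation of $\EE f$ matching the target $\log s-\frac{1}{2\ln 2}\frac{s}{n}$. Since the real dimension of $\C^n\otimes\C^s$ is $N=2ns$, Levy's Lemma gives $\P(|f-\EE f|>t)\le e^{-cNt^2/L^2}$, so the claimed exponent $nst^2/\log^2 n$ will follow once we show $L=O(\log n)$.

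For the Lipschitz bound I would identify $\psi$ with an $n\times s$ matrix $M$ of unit Hilbert--Schmidt norm, so that $\rho_\psi=MM^\dagger$ has rank at most $s$ and its eigenvalues $\lambda_k$ are the squared singular values of $M$. Differentiating $S(MM^\dagger)$ yields, up to a universal constant, $\|\nabla_\psi f\|_2^2\lesssim \sum_k\lambda_k\big(\log\lambda_k\big)^2+O(1)$. The essential point --- which is what saves the argument even though $S$ is \emph{not} Lipschitz in trace distance --- is that each term behaves like $\big(\sqrt{\lambda_k}\,\log\lambda_k\big)^2$, which stays bounded as $\lambda_k\to 0$. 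Using $\sum_k\lambda_k=1$ together with the rank constraint (at most $s$ nonzero eigenvalues), the quantity $\sum_k\lambda_k(\log\lambda_k)^2$ is maximized, up to a constant, by the flat spectrum $\lambda_k\equiv 1/s$, giving $O\big((\log s)^2\big)$. Hence $\|\nabla_\psi f\|_2=O(\log s)=O(\log n)$ on the dense open set where $\rho_\psi$ is nonsingular, and by continuity $f$ is globally $O(\log n)$-Lipschitz.

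For the mean, since $s\le n$ the nonzero spectrum of $\rho_\psi$ equals that of the marginal on the $\C^s$ factor, so $S(\rho_\psi)$ has the law of the entropy of the smaller, $s$-dimensional, subsystem of a random pure state on $\C^s\otimes\C^n$. Invoking the exact average-entropy (Page) formula, $\EE\,S=\frac{1}{\ln 2}\big(\sum_{k=n+1}^{sn}\frac1k-\frac{s-1}{2n}\big)$, and expanding the harmonic sum via $\sum_{k=n+1}^{sn}\frac1k=\ln s-\frac{1}{2n}+\frac{1}{2sn}+O(n^{-2})$, gives $\EE\,S=\log s-\frac{1}{2\ln 2}\frac{s}{n}+O\big(\tfrac{1}{sn}\big)$. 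The $O(1/(sn))$ discrepancy between $\EE f$ and the target value is negligible and, by the triangle inequality, can be absorbed into the statement (at worst by adjusting the universal constant). Feeding $L=O(\log n)$, $N=2ns$ and this value of $\EE f$ into Lemma~\ref{lemma:levy} then yields the asserted concentration $e^{-cnst^2/\log^2 n}$.

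The hard part is the Lipschitz estimate. The obstruction is real: entropy has unbounded trace-norm gradient near rank-deficient states, so routing the proof through a Fannes--Audenaert bound would produce a non-Lipschitz binary-entropy remainder and no clean constant. The remedy is the explicit gradient computation above, in which the divergence of $\log\lambda_k$ is exactly compensated by the weight $\lambda_k$; extracting the correct $(\log s)^2$ scaling --- rather than, say, $s(\log s)^2$ --- crucially uses both the normalization $\sum_k\lambda_k=1$ and the rank bound. A secondary subtlety is that recovering the precise subleading term $-\frac{1}{2\ln 2}\frac{s}{n}$, rather than merely the leading $\log s$, forces us to use the exact Page formula and its careful asymptotic expansion, since this correction is precisely what a coarser estimate of the mean would discard.
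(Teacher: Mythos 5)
Your proof is correct and follows essentially the same route as the paper's: the paper's proof is exactly an application of Levy's Lemma to $\psi\mapsto S(\rho_\psi)$, citing \cite{hayden_2006} (Lemma III.2) for the $O(\log n)$ Lipschitz constant and \cite{hayden_2006} (Lemma II.4) for the mean $\log s - \frac{1}{2\ln 2}\frac{s}{n} + o\left(\frac{s}{n}\right)$. Your gradient computation and Page-formula expansion simply reprove those two cited ingredients inline (with a mean estimate that is in fact slightly sharper than the cited $o(s/n)$ remainder).
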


\begin{proof}
Define the function $f:\psi\in S_{\C^n\otimes\C^s}\mapsto S\big(\rho_{\psi}\big)\in\R$. We know from \cite[Lemma III.2]{hayden_2006}, that $f$ is $3\log n$-Lipschitz. Besides, we also know from \cite[Lemma II.4]{hayden_2006}, that $f$ has average (w.r.t. the uniform probability measure over $S_{\C^n\otimes\C^s}$)
\[ \EE_{\psi} f(\psi) = \log s -\frac{1}{2\ln 2}\frac{s}{n} + o\left(\frac{s}{n}\right). \]
Having at hand these two estimates, Lemma \ref{lemma:S} is a direct consequence of Levy's Lemma \ref{lemma:levy}.
\end{proof}

\begin{lemma} \label{lemma:S'}
Fix $d\in\N$ and $t>0$. If $H$ is a uniformly distributed $Cd^2t^2/\log^2 d$-dimensional subspace of $\C^d\otimes\C^d$, with $C>0$ a universal constant, then
\[ \P\left(\exists\ \psi\in H\cap S_{\C^d\otimes\C^d}:\ \left|S(\rho_{\psi}) - \log d + \frac{1}{2\ln 2} \right| > t \right) \leq e^{-cd^2t^2/\log^2d}. \]
\end{lemma}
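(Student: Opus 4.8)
The plan is to apply Dvoretzky's Theorem (Lemma \ref{lemma:dvo}) directly to the entropy function, in exact parallel with the way Lemma \ref{lemma:S} was derived from Levy's Lemma. Set $n=d^2$, regard $\C^d\otimes\C^d$ as $\C^{d^2}$, and consider the function $f:\psi\in S_{\C^d\otimes\C^d}\mapsto S(\rho_\psi)\in\R$, where $\rho_\psi=\tr_{\C^d}\ketbra{\psi}{\psi}$ is the reduced state on the first factor. Three ingredients must be assembled before invoking Lemma \ref{lemma:dvo}: a Lipschitz constant for $f$, the fact that $f$ is circled, and the value of $\EE f$.

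First I would record that $f$ is $L$-Lipschitz with $L=3\log d$, by the very same estimate from \cite[Lemma III.2]{hayden_2006} used in the proof of Lemma \ref{lemma:S}, now with the kept subsystem having dimension $d$. Second, $f$ is circled: the quantity $S(\rho_\psi)$ depends on $\psi$ only through $\rho_\psi$, which is unchanged under the global phase $\psi\mapsto e^{i\theta}\psi$. This invariance is precisely the hypothesis that allows one to upgrade from Levy's Lemma to Dvoretzky's Theorem. Third, in the balanced case at hand — kept dimension $d$ and traced-out dimension $d$, i.e.\ $n=s=d$ in the notation of Lemma \ref{lemma:S} — the average supplied by \cite[Lemma II.4]{hayden_2006} specializes to
\[ \EE_\psi f(\psi) = \log d - \frac{1}{2\ln 2} + o(1). \]

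With these three facts in place, Dvoretzky's Theorem (Lemma \ref{lemma:dvo}), applied with ambient dimension $d^2$ and Lipschitz constant $3\log d$, guarantees that a uniformly random subspace $H$ of dimension $Cd^2t^2/\log^2 d$ satisfies $\sup_{\psi\in H\cap S_{\C^d\otimes\C^d}}|f(\psi)-\EE f|\leq t$ except with probability at most $e^{-cd^2t^2/\log^2 d}$; the numerical factor $9=(3\log d)^2/\log^2 d$ stemming from $L^2$ is absorbed into the universal constants $C$ and $c$.

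The only point requiring care — and the one I would flag as the (very mild) main obstacle — is replacing $\EE f$ by the clean target $\log d-\tfrac{1}{2\ln 2}$ appearing in the statement. Since the two differ by an $o(1)$ term, for all $d$ beyond some absolute threshold this discrepancy is below $t/2$ (alternatively one reruns Dvoretzky with deviation $t/2$), and the triangle inequality then yields $|f(\psi)-(\log d-\tfrac{1}{2\ln 2})|\leq t$ uniformly over $\psi\in H\cap S_{\C^d\otimes\C^d}$; the finitely many remaining small values of $d$ are absorbed by shrinking the constant $c$ so that the claimed bound becomes vacuous there. This completes the proof.
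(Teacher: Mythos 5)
Your proposal is correct and follows exactly the paper's own route: define $f(\psi)=S(\rho_\psi)$ on $S_{\C^d\otimes\C^d}$, invoke the $3\log d$-Lipschitz bound and the mean value $\log d-\tfrac{1}{2\ln 2}+o(1)$ from \cite{hayden_2006}, and apply Dvoretzky's Theorem (Lemma \ref{lemma:dvo}). Your explicit verification that $f$ is circled and your handling of the $o(1)$ discrepancy are details the paper leaves implicit, but they do not change the argument.
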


\begin{proof}
Define the function $f:\psi\in S_{\C^d\otimes\C^d}\mapsto S\big(\rho_{\psi}\big)\in\R$. We know from \cite[Lemma III.2]{hayden_2006}, that $f$ is $3\log d$-Lipschitz. Besides, we also know from \cite[Lemma II.4]{hayden_2006}, that $f$ has average (w.r.t. the uniform probability measure over $S_{\C^d\otimes\C^d}$)
\[ \EE_{\psi} f(\psi) = \log d -\frac{1}{2\ln 2} + o\left(1\right). \]
Having at hand these two estimates, Lemma \ref{lemma:S'} is a direct consequence of Dvoretzky's Theorem \ref{lemma:dvo}.
\end{proof}

\begin{lemma} \label{lemma:norm}
Fix $n,s\in\N$ with $s\leq n$, as well as $x\in S_{\C^n}$, and let $\rho$ be a random state on $\C^n$ induced by an environment $\C^s$. Then,
\[ \forall\ t>0,\ \P\left( \left|\sqrt{\bra{x}\rho\ket{x}}-\frac{1}{\sqrt{n}} \right| > \frac{t}{\sqrt{n}} \right) \leq e^{-cst^2}, \]
where $c>0$ is a universal constant.
\end{lemma}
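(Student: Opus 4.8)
The plan is to recognize $\bra{x}\rho\ket{x}$ as the expectation value of a \emph{fixed} rank-$s$ orthogonal projector in the random global pure state, thereby reducing the claim to a concentration estimate for a single scalar function on the sphere $S_{\C^n\otimes\C^s}$, to which Levy's Lemma applies. Writing $\rho=\tr_{\C^s}\ketbra{\psi}{\psi}$ with $\psi$ uniform on $S_{\C^n\otimes\C^s}$, the partial-trace identity gives
\[ \bra{x}\rho\ket{x} = \tr\big[(\ketbra{x}{x}\otimes\Id_{\C^s})\,\ketbra{\psi}{\psi}\big] = \bra{\psi}P_x\ket{\psi} = \|P_x\psi\|^2, \qquad P_x := \ketbra{x}{x}\otimes\Id_{\C^s}, \]
where $P_x$ is an orthogonal projector of rank $s$ on $\C^n\otimes\C^s$ (complex dimension $ns$). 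Hence $\sqrt{\bra{x}\rho\ket{x}}=g(\psi)$ with $g:\psi\mapsto\|P_x\psi\|$. It is essential to work with $g$ itself rather than with $g^2=\bra{\psi}P_x\ket{\psi}$: the latter fluctuates on the scale $1/\sqrt{ns}$ about $1/n$, and square-rooting its concentration would cost a factor of $n$ in the exponent, whereas $g$ is \emph{globally} $1$-Lipschitz and yields the sharp rate directly.

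First I would record the two inputs for Levy's Lemma \ref{lemma:levy}. The function $g$ is $1$-Lipschitz, since $\big|\,\|P_x\psi\|-\|P_x\phi\|\,\big|\leq\|P_x(\psi-\phi)\|\leq\|\psi-\phi\|$ using $\|P_x\|_\infty=1$. For its average I compute the second moment exactly: $\EE\,g(\psi)^2=\EE\bra{\psi}P_x\ket{\psi}=\tr\big(P_x\,\EE\ketbra{\psi}{\psi}\big)=\tr(P_x)/(ns)=s/(ns)=1/n$. Applying Lemma \ref{lemma:levy} on $S_{\C^n\otimes\C^s}$ (complex dimension $ns$, Lipschitz constant $L=1$) with deviation parameter $t/\sqrt n$ then gives
\[ \P\!\left(\big|g(\psi)-\EE g\big|>\tfrac{t}{\sqrt n}\right)\leq e^{-c\,(ns)\,(t/\sqrt n)^2}=e^{-cst^2}, \]
which is precisely the target exponent.

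The main (and essentially only) obstacle is that Levy's Lemma concentrates $g$ about its own mean $\EE g$, whereas the statement asks for concentration about the value $1/\sqrt n$; so I must control $|\EE g-1/\sqrt n|$ against the tolerance $t/\sqrt n$. By Jensen, $\EE g\leq\sqrt{\EE g^2}=1/\sqrt n$. For the matching lower bound I would use that a $1$-Lipschitz function on $S_{\C^n\otimes\C^s}$ has, via the subgaussian tail just established (integrating $\mathrm{Var}(g)=\int_0^\infty 2u\,\P(|g-\EE g|>u)\,du$), variance $\mathrm{Var}(g)\leq C'/(ns)$. Combined with $\EE g^2=1/n$ this yields $(\EE g)^2=1/n-\mathrm{Var}(g)\geq\tfrac1n(1-C'/s)$, so that $0\leq 1/\sqrt n-\EE g\leq C'/(s\sqrt n)$. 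Thus $|\EE g-1/\sqrt n|=O\!\big(1/(s\sqrt n)\big)$ is negligible compared with $t/\sqrt n$ in the only regime where the bound is nontrivial (namely $st^2\gtrsim 1$, where $t$ dominates $1/s$), and a routine recentering together with a rescaling of the universal constant $c$ upgrades concentration about $\EE g$ into the stated concentration about $1/\sqrt n$.
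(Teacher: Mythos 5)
Your proof is correct and takes essentially the same route as the paper: the paper likewise applies Levy's Lemma to the $1$-Lipschitz function $\psi\mapsto\sqrt{\bra{x}\rho_{\psi}\ket{x}}$ with average $1/\sqrt{n}+o(1/\sqrt{n})$, simply citing App.~B of \cite{ASW} for the two estimates that you derive explicitly via $P_x=\ketbra{x}{x}\otimes\Id$ and $\EE\,g^2=1/n$. Your quantitative control of the recentering error $|\EE g-1/\sqrt{n}|=O(1/(s\sqrt{n}))$ is a detail the paper leaves implicit, and it is handled correctly.
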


\begin{proof}
Define the function $g:\psi\in S_{\C^n\otimes\C^s}\mapsto \sqrt{\bra{x}\rho_{\psi}\ket{x}}$. We know from \cite[App.~B]{ASW}, that $g$ is $1$-Lipschitz. Besides, we also know from \cite{ASW}, Appendix B, that $g$ has average (w.r.t. the uniform probability measure over $S_{\C^n\otimes\C^s}$)
\[ \EE_{\psi} g(\psi) = \frac{1}{\sqrt{n}} + o\left(\frac{1}{\sqrt{n}}\right). \]
Having at hand these two estimates, the conclusion of Lemma \ref{lemma:norm} is a direct consequence of Levy's Lemma \ref{lemma:levy}.
\end{proof}

\subsection{Typical value of the entanglement of formation of random induced states}
\label{sec:E_F}

\begin{proposition} \label{prop:subspace}
Fix $t>0$. Let $\rho$ be a random state on $\C^d\otimes\C^d$, induced by some environment $\C^s$, with $s\leq Cd^2t^2\log^2 d$ for some universal constant $C>0$. Then,
\[ \P\left( \exists\ \varphi\in\mathrm{supp}(\rho)\cap S_{\C^d\otimes\C^d}:\ \left|S(\rho_{\varphi})-\log d +\frac{1}{\ln 2}\right| > t \right) \leq e^{-cd^2t^2\log^2 d}, \]
where $c>0$ is a universal constant.
\end{proposition}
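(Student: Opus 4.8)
The plan is to read this statement as a direct consequence of Lemma~\ref{lemma:S'}, once one identifies the support of a random induced state with a Haar-distributed subspace. First I would record the structural fact that, for $s\leq d^2$, if $\rho\sim\mu_{d^2,s}$ then $\mathrm{supp}(\rho)$ is a uniformly distributed $s$-dimensional subspace of $\C^d\otimes\C^d$. This follows from the purification picture: writing $\rho=\tr_{\C^s}\ketbra{\psi}{\psi}$ with $\ket{\psi}$ uniform on $S_{\C^d\otimes\C^d\otimes\C^s}$, the range of $\rho$ is the span of the left Schmidt vectors of $\ket{\psi}$ across the cut $(\C^d\otimes\C^d):\C^s$; by unitary invariance of the uniform measure this span is a uniformly random $s$-dimensional subspace $H$, whose law is independent of the eigenvalue data of $\rho$. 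Consequently the event in the statement has exactly the same probability as the event $\{\exists\,\varphi\in H\cap S_{\C^d\otimes\C^d}:\ |S(\rho_{\varphi})-\log d+\tfrac{1}{2\ln 2}|>t\}$ for a Haar-random subspace $H$ of dimension $s$, where $\rho_\varphi=\tr_{\C^d}\ketbra{\varphi}{\varphi}$ is the marginal on the first factor.

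Second, I would invoke the concentration-on-a-subspace estimate. Setting $f:\varphi\mapsto S(\rho_{\varphi})$ on $S_{\C^d\otimes\C^d}$, the two inputs needed are precisely those used in the proof of Lemma~\ref{lemma:S'} and quoted from \cite{hayden_2006}: $f$ is circled and $3\log d$-Lipschitz, and $\EE_\varphi f(\varphi)=\log d-\tfrac{1}{2\ln 2}+o(1)$. Dvoretzky's Theorem (Lemma~\ref{lemma:dvo}) then controls $f$ uniformly over a random subspace of dimension up to the budget $Cd^2t^2/\log^2 d$ dictated by the Lipschitz constant. The key remark is that the net-and-union-bound argument underlying Lemma~\ref{lemma:dvo} yields the same deviation bound a fortiori for any random subspace of smaller dimension; hence it applies verbatim to our support $H$ as soon as $s$ lies within this budget, giving, except with probability $e^{-cd^2t^2/\log^2 d}$, that $|f(\varphi)-\EE_\varphi f|\leq t$ simultaneously for all $\varphi\in H\cap S_{\C^d\otimes\C^d}$. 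Absorbing the $o(1)$ into the constant yields the claim, with the typical value $\log d-\tfrac{1}{2\ln 2}$ inherited from Lemma~\ref{lemma:S'}.

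The only genuinely delicate point is the passage from the usual reading of Dvoretzky (\emph{there exists} a good subspace) to the statement that \emph{our specific, randomly generated} support is good with the stated probability; this is resolved exactly by the distributional identity of the first step, since $\mathrm{supp}(\rho)$ is not adversarially chosen but itself Haar-distributed. I expect this identification, together with checking that the rank bound keeps $s$ within the Dvoretzky budget, to be the main thing to get right, as the Lipschitz and mean estimates require no new work. Finally, because both the upper and lower deviations are controlled uniformly over $\mathrm{supp}(\rho)$, Theorem~\ref{th:main'} reads off immediately: every pure-state decomposition of $\rho$ lives in its support, so $E_F(\rho)=\inf\sum_i p_i S(\rho_{\varphi_i})$ is pinned to $\log d-\tfrac{1}{2\ln 2}$ up to $t$.
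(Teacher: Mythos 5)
Your proposal is correct and follows essentially the same route as the paper: the paper's proof is precisely the observation that $\mathrm{supp}(\rho)$ is a Haar-distributed $s$-dimensional subspace of $\C^d\otimes\C^d$ followed by an application of Lemma~\ref{lemma:S'}, and your additional justifications (the purification/unitary-invariance argument and the monotonicity of the Dvoretzky budget in the subspace dimension) are exactly the details the paper leaves implicit. Your centering at $\log d - \tfrac{1}{2\ln 2}$ and the condition $s\leq Cd^2t^2/\log^2 d$ are the consistent readings of what appear to be typos in the proposition as stated.
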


\begin{proof} One simply has to observe that $\mathrm{supp}(\rho)$ is a uniformly distributed $s$-dimensional subspace of $\C^d\otimes\C^d$, and apply Lemma \ref{lemma:S'}.
\end{proof}

\subsection{Upper and lower bounds on the typical relative entropy of entanglement of random induced states}
\label{sec:E_R}

Observe the following alternative form of the relative entropy of entanglement: For a state $\rho$ on $\C^d\otimes\C^d$,
\begin{equation} \label{eq:def-E_R} E_R(\rho) = \min\left\{ -\tr(\rho\log\sigma) \st \sigma\in\cS(\C^d:\C^d)\right\} -S(\rho).
\end{equation}

\begin{lemma} \label{lemma:upper bound}
For any state $\rho$ on $\C^d\otimes\C^d$, we have
\[ E_R(\rho) \leq 2\log d - S(\rho). \]
\end{lemma}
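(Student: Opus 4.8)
The plan is to exploit the alternative variational form of the relative entropy of entanglement already recorded in Eq.~\eqref{eq:def-E_R}, namely
\[ E_R(\rho) = \min\left\{ -\tr(\rho\log\sigma) \st \sigma\in\cS(\C^d:\C^d)\right\} -S(\rho). \]
Since the first term is a minimum over all separable states $\sigma$, it can only be decreased by plugging in any \emph{particular} admissible $\sigma$. Thus it suffices to exhibit a single separable state for which $-\tr(\rho\log\sigma)$ equals (or is bounded by) $2\log d$, and the claimed inequality follows immediately.

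The natural candidate is the maximally mixed state $\sigma = \I/d^2$ on $\C^d\otimes\C^d$. First I would observe that this state is separable across the cut $\C^d:\C^d$, since it factorizes as the product $(\I_d/d)\otimes(\I_d/d)$ of two local maximally mixed states; hence $\I/d^2\in\cS(\C^d:\C^d)$ is an admissible choice in the minimization. Next I would compute $\log\sigma$: because $\sigma$ is a scalar multiple of the identity, $\log\sigma = \log(1/d^2)\,\I = -2\log d\,\I$ (recalling $\log\equiv\log_2$). Substituting into the cross term gives $-\tr(\rho\log\sigma) = 2\log d\,\tr(\rho) = 2\log d$, using $\tr(\rho)=1$.

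Putting these together, the minimum over separable $\sigma$ is at most $2\log d$, so
\[ E_R(\rho) \leq 2\log d - S(\rho), \]
which is exactly the assertion of the lemma. There is no real obstacle here: the only thing to verify carefully is that the maximally mixed state is separable and that the base-$2$ logarithm produces the factor $2\log d$ rather than $\log d$ or $\ln(d^2)$; everything else is a one-line evaluation. The content of the lemma is simply that the universal separable reference state $\I/d^2$ already certifies the stated upper bound.
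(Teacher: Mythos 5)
Your proof is correct and follows exactly the same route as the paper: it uses the variational form of $E_R$ from Eq.~\eqref{eq:def-E_R} and substitutes the separable state $\Id/d^2$ to obtain $-\tr(\rho\log(\Id/d^2)) = 2\log d$. Your write-up merely spells out the separability of the maximally mixed state and the base-$2$ bookkeeping, which the paper leaves implicit.
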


\begin{proof}
It is clear from equation \eqref{eq:def-E_R} that we have, in particular,
\[ E_R(\rho) \leq -\tr\left(\rho\log\frac{\Id}{d^2}\right) - S(\rho) = 2\log d - S(\rho). \]
This concludes the proof of Lemma \ref{lemma:upper bound}.
\end{proof}

\begin{proposition} \label{prop:upper bound}
Fix $0<t<1$. Let $\rho$ be a random state on $\C^d\otimes\C^d$ induced by some environment $\C^s$, with $s\leq d^2$. Then,
\[ \P\left( E_R(\rho) > 2\log d - \log s + \frac{1}{2\ln 2}\frac{s}{d^2} +t \right) \leq e^{-cd^2st^2/\log^2 d}, \]
where $c>0$ is a universal constant.
\end{proposition}

\begin{remark}
\label{rem:11}
Note that the bound appearing in Proposition \ref{prop:upper bound} is non-trivial only in the regime $d\leq s\leq d^2$. Indeed, it also holds for any state $\rho$ on $\C^d\otimes\C^d$ that $E_R(\rho)\leq E_F(\rho)\leq \log d$, and if $s<d$, we do not learn anything better than that from Proposition \ref{prop:upper bound} for a random state $\rho$ on $\C^d\otimes\C^d$ induced by some environment $\C^s$. Hence in words, Proposition \ref{prop:upper bound} actually tells us the following: if $\rho\sim\mu_{d^2,s}$ with $d \ll s\leq d^2$, then $E_R(\rho)$ is w.h.p. smaller than $2\log d - \log s + O(s/d^2)$, as $d,s \rightarrow \infty$.
\end{remark}

\begin{proof}
From Lemma \ref{lemma:upper bound}, it is clear that, for any $0<t<1$,
\begin{equation} \label{eq:A} \P\left( E_R(\rho) > 2\log d - \log s + \frac{1}{2\ln 2}\frac{s}{d^2} +t \right) \leq \P\left( S(\rho) < \log s - \frac{1}{2\ln 2}\frac{s}{d^2} - t \right) .\end{equation}
Now, we know from Lemma \ref{lemma:S} that the probability on the r.h.s. of inequality \eqref{eq:A} is smaller than $e^{-cd^2st^2/\log^2d}$, which is precisely the announced result.
\end{proof}

\begin{lemma} \label{lemma:lower bound}
For any state $\rho$ on $\C^d\otimes\C^d$, we have
\[ E_R(\rho) \geq -\log\left(\max\big\{ \bra{x\otimes y}\rho\ket{x\otimes y} \st x,y\in S_{\C^d} \big\}\right) - S(\rho). \]
\end{lemma}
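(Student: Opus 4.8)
The plan is to start from the alternative expression \eqref{eq:def-E_R} for the relative entropy of entanglement. Since the von Neumann entropy term $-S(\rho)$ appears identically on both sides of the claimed inequality, it cancels, and it suffices to establish the trace bound
\[ \min\left\{ -\tr(\rho\log\sigma) \st \sigma\in\cS(\C^d:\C^d) \right\} \geq -\log M, \quad\text{with}\quad M := \max\big\{ \bra{x\otimes y}\rho\ket{x\otimes y} \st x,y\in S_{\C^d} \big\}. \]
Accordingly, I would fix an arbitrary separable $\sigma$ and prove $-\tr(\rho\log\sigma)\geq -\log M$; taking the infimum over $\sigma\in\cS(\C^d:\C^d)$ then yields the lemma.

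The key step is to exploit the concavity of the logarithm in order to decouple $\rho$ from $\sigma$ inside the trace. Writing the spectral decomposition $\sigma=\sum_a s_a\ketbra{f_a}{f_a}$ and setting $\mu_a := \bra{f_a}\rho\ket{f_a}$, the numbers $(\mu_a)$ form a probability distribution (this uses $\tr\rho=1$ together with $\mathrm{supp}(\rho)\subseteq\mathrm{supp}(\sigma)$, so that $\mu_a=0$ whenever $s_a=0$), and scalar Jensen applied to the concave map $\log$ gives
\[ \tr(\rho\log\sigma) = \sum_{a:\,s_a>0} \mu_a\log s_a \leq \log\Big(\sum_{a:\,s_a>0} \mu_a s_a\Big) = \log\tr(\rho\sigma). \]
This is the crux of the argument: it replaces the matrix logarithm by the single scalar $\tr(\rho\sigma)$.

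The remaining step invokes separability. For a product-state decomposition $\sigma=\sum_k q_k\ketbra{x_k\otimes y_k}{x_k\otimes y_k}$, each term obeys $\bra{x_k\otimes y_k}\rho\ket{x_k\otimes y_k}\leq M$ by the very definition of $M$, so $\tr(\rho\sigma)=\sum_k q_k\bra{x_k\otimes y_k}\rho\ket{x_k\otimes y_k}\leq M$. Since $-\log$ is decreasing, this gives $-\log\tr(\rho\sigma)\geq -\log M$, and chaining with the previous display yields $-\tr(\rho\log\sigma)\geq -\log M$ for every separable $\sigma$, which is exactly what is needed.

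The only delicate point, and where I would be careful rather than quick, is the support/finiteness bookkeeping: if $\mathrm{supp}(\rho)\not\subseteq\mathrm{supp}(\sigma)$ then $-\tr(\rho\log\sigma)=+\infty$ and the bound holds trivially, so one restricts to those $\sigma$ whose support contains that of $\rho$ before applying Jensen, and one must keep every inequality pointing the right way (the direction of Jensen for a concave function, and the monotonicity of $-\log$). I do not expect any of this to present a genuine obstacle, so the overall argument should be short.
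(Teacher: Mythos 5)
Your proposal is correct and follows essentially the same route as the paper's proof: reduce via Eq.~\eqref{eq:def-E_R} to bounding $-\tr(\rho\log\sigma)$, apply Jensen/concavity of $\log$ in the eigenbasis of $\sigma$ to obtain $-\tr(\rho\log\sigma)\geq-\log\tr(\rho\sigma)$, and then use that the maximum of $\tr(\rho\sigma)$ over separable $\sigma$ is attained on pure product states. Your extra care about the support condition $\mathrm{supp}(\rho)\subseteq\mathrm{supp}(\sigma)$ is a welcome but minor refinement of the same argument.
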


\begin{proof}
We see from equation \eqref{eq:def-E_R} that the only thing we have to prove is that
\begin{equation} \label{eq:B} \min\left\{ -\tr(\rho\log\sigma) \st \sigma\in\cS(\C^d:\C^d)\right\} \geq -\log\left(\max\big\{ \bra{x\otimes y}\rho\ket{x\otimes y} \st x,y\in S_{\C^d} \big\}\right). \end{equation}
Now, for any $\sigma\in\cS(\C^d:\C^d)$, we see by concavity of $\log$ that $-\tr(\rho\log\sigma)\geq -\log\tr(\rho\sigma)$. Indeed, denoting by $\{\lambda_1,\ldots,\lambda_{d^2}\}$ the eigenvalues and by $\{e_1,\ldots,e_{d^2}\}$ an eigenbasis of $\sigma$, we have
\[ \tr(\rho\log\sigma)=\sum_{i=1}^{d^2}\bra{e_i}\rho\ket{e_i}\log\lambda_i\leq \log\left(\sum_{i=1}^{d^2}\bra{e_i}\rho\ket{e_i}\lambda_i\right)=\log\tr(\rho\sigma). \]
As a consequence,
\[ \min\left\{ -\tr(\rho\log\sigma) \st \sigma\in\cS(\C^d:\C^d)\right\} \geq \min\left\{ -\log\tr(\rho\sigma) \st \sigma\in\cS(\C^d:\C^d)\right\} = -\log\left(\max\left\{ \tr(\rho\sigma) \st \sigma\in\cS(\C^d:\C^d)\right\}\right). \]
It then follows from extremality of pure separable states amongst all separable states that
\[ \max\left\{ \tr(\rho\sigma) \st \sigma\in\cS(\C^d:\C^d)\right\} = \max\big\{ \bra{x\otimes y}\rho\ket{x\otimes y} \st x,y\in S_{\C^d} \big\}, \]
so that equation \eqref{eq:B} indeed holds.
\end{proof}

\begin{proposition} \label{prop:lower bound}
Fix $0<t<1$. Let $\rho$ be a random state on $\C^d\otimes\C^d$ induced by some environment $\C^s$, with $C d\log(1/t)/t^2\leq s\leq d^2$ for some universal constant $C>0$. Then,
\[ \P\left( E_R(\rho) < 2\log d - \log s +\frac{1}{2\ln 2}\frac{s}{d^2} - t \right) \leq e^{-cst^2}, \]
where $c>0$ is a universal constant.
\end{proposition}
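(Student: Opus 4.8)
Proposition \ref{prop:lower bound} asks for a lower bound on the relative entropy of entanglement that holds with high probability. The plan is to combine the general lower bound from Lemma \ref{lemma:lower bound} with the two concentration statements already available, namely Lemma \ref{lemma:S} (for the entropy $S(\rho)$) and Lemma \ref{lemma:norm} (for the overlap with a fixed product vector). Lemma \ref{lemma:lower bound} gives
\[ E_R(\rho) \geq -\log\Big(\max\big\{\bra{x\otimes y}\rho\ket{x\otimes y} \st x,y\in S_{\C^d}\big\}\Big) - S(\rho), \]
so it suffices to show two things with high probability: that $S(\rho)$ is close to $\log s - \frac{1}{2\ln 2}\frac{s}{d^2}$, and that the maximal product overlap $M(\rho) := \max\{\bra{x\otimes y}\rho\ket{x\otimes y}\}$ is at most roughly $1/d^2$, so that $-\log M(\rho) \gtrsim 2\log d$. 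Adding these gives the claimed bound $2\log d - \log s + \frac{1}{2\ln 2}\frac{s}{d^2}$.

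The entropy piece is immediate: Lemma \ref{lemma:S} controls $|S(\rho) - \log s + \frac{1}{2\ln 2}\frac{s}{d^2}|$ with a deviation probability $e^{-cd^2 s t^2/\log^2 d}$, which is far smaller than the target $e^{-cst^2}$, so this contributes nothing to the dominant error term. The real work is the overlap bound. First I would fix a single product vector $x\otimes y \in S_{\C^d\otimes\C^d}$ and apply Lemma \ref{lemma:norm} (with $n=d^2$) to $\sqrt{\bra{x\otimes y}\rho\ket{x\otimes y}}$: this quantity concentrates around $1/d$, so $\bra{x\otimes y}\rho\ket{x\otimes y}$ concentrates around $1/d^2$, with a per-vector failure probability $e^{-cs\tau^2}$ for a deviation parameter $\tau$.

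The hard part will be upgrading this pointwise control into a \emph{uniform} bound over all product vectors $x\otimes y$, since the maximum ranges over a continuum. The natural route is a net argument: choose an $\e$-net $\mathcal{N}$ on $S_{\C^d}\times S_{\C^d}$, whose cardinality is $(C/\e)^{O(d)}$, apply Lemma \ref{lemma:norm} to each net point, and take a union bound. The union-bound cost is $(C/\e)^{O(d)} e^{-cs\tau^2} = \exp\big(O(d\log(1/\e)) - cs\tau^2\big)$, and this is precisely where the hypothesis $s \geq Cd\log(1/t)/t^2$ enters: it guarantees that the entropy gain $cs\tau^2$ dominates the net's entropy $O(d\log(1/\e))$, leaving a net failure probability bounded by $e^{-cst^2}$. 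One then controls the discretization error by the Lipschitz property of $x\otimes y \mapsto \sqrt{\bra{x\otimes y}\rho\ket{x\otimes y}}$ (which is $1$-Lipschitz by the same computation as in Lemma \ref{lemma:norm}, since the map $z\mapsto \sqrt{\bra{z}\rho\ket{z}}$ is $1$-Lipschitz and the product map into $S_{\C^{d^2}}$ is nonexpanding), so that the maximum over the full sphere exceeds the maximum over the net by at most $O(\e)$. Choosing $\e$ polynomially small in $1/d$ (so that $\log(1/\e) = O(\log d)$ and the discretization error is absorbed into $t$) balances the two competing terms.

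Putting the pieces together: on the intersection of the two good events, $-\log M(\rho) \geq 2\log d - t/2$ and $-S(\rho) \geq -\log s + \frac{1}{2\ln 2}\frac{s}{d^2} - t/2$, so Lemma \ref{lemma:lower bound} yields $E_R(\rho) \geq 2\log d - \log s + \frac{1}{2\ln 2}\frac{s}{d^2} - t$. A final union bound over the two events keeps the total failure probability at $e^{-cst^2}$, as claimed. I expect the only genuinely delicate point to be calibrating the net parameters and the deviation $\tau$ against $t$ so that both the net cardinality and the discretization error stay subdominant; everything else is a direct application of the concentration lemmas already in hand.
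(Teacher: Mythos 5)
Your overall strategy is the same as the paper's: apply Lemma \ref{lemma:lower bound}, control $S(\rho)$ by Lemma \ref{lemma:S}, and control the maximal product overlap $M(\rho)=\max\{\bra{x\otimes y}\rho\ket{x\otimes y} \st x,y\in S_{\C^d}\}$ by combining the pointwise concentration of Lemma \ref{lemma:norm} with a net and a union bound. However, your discretization step does not close under the stated hypothesis on $s$, and this is exactly the delicate point you flagged. The issue is one of scale: the relevant magnitude of $\sqrt{M(\rho)}$ is $1/d$ and the deviations you must resolve are of order $t/d$. With an \emph{additive} Lipschitz comparison between the max over the sphere and the max over the net, the mesh must therefore satisfy $\e\lesssim t/d$ (your alternative, $\e$ polynomially small in $1/d$, has the same effect). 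Either way the net on $S_{\C^d}\times S_{\C^d}$ has cardinality $e^{\Theta(d\log d)}$, so the union bound costs $e^{\Theta(d\log d)}e^{-c's\tau^2}$, and beating this requires $st^2\gtrsim d\log d$. The hypothesis only guarantees $st^2\geq Cd\log(1/t)$, which for fixed $t$ is $O(d)$ and is swamped by the $d\log d$ entropy of the net. Your argument would prove the proposition only under the strictly stronger assumption $s\gtrsim d\log d/t^2$.

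The paper avoids this with a \emph{multiplicative} net comparison. Writing $x=\bar x+u$, $y=\bar y+v$ with $\|u\|,\|v\|\leq\delta$ and expanding the quadratic form, each cross term is bounded via Cauchy--Schwarz by $\delta M(\rho)$ (because $\bra{\bar x\otimes v}\rho\ket{\bar x\otimes v}\leq \|v\|^2 M(\rho)$, etc.), giving $M(\rho)\leq M_\delta(\rho)+4\delta M(\rho)$, i.e. $M_\delta(\rho)\geq(1-4\delta)M(\rho)$. A \emph{relative} error of $O(\delta)$ suffices to resolve a relative deviation of $t$, so one may take $\delta=t/10$ independently of $d$; the net then has only $(30/t)^{4d}=e^{O(d\log(1/t))}$ points, which is exactly what $e^{-c'st^2}$ with $s\geq Cd\log(1/t)/t^2$ can absorb. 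So the fix is to replace your additive Lipschitz discretization by this multiplicative one; the rest of your outline (the split of the failure event, the entropy term being negligible, the final union bound) matches the paper's proof.
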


\begin{remark}
Note that Proposition \ref{prop:lower bound} actually tells us the following: if $\rho\sim\mu_{d^2,s}$ with $d\ll s \ll d^2$, then $E_R(\rho)$ is w.h.p. bigger than $2\log d - \log s - o(1)$, as $d,s\rightarrow+\infty$. And in the case of sufficiently mixed states, the refinement: if $\rho\sim\mu_{d^2,s}$ with $d^{4/3}\ll s\leq d^2$, then $E_R(\rho)$ is w.h.p. bigger than $2\log d - \log s - O(s/d^2)$, as $d,s\rightarrow \infty$. That is, in that regime, the fluctuations
from $2\log d - \log s$ are of the same order $\pm O(s/d^2)$, cf.~Remark~\ref{rem:11}.
We conjecture that for $1\leq s \leq O(d)$, w.h.p.~$E_R(\rho) \geq \log d-O(1)$.
\end{remark}

\begin{proof}
Set $M(\rho)=\max\left\{ \bra{x\otimes y}\rho\ket{x\otimes y} \st x, y\in S_{\C^d}\right\}$. From Lemma \ref{lemma:lower bound}, it is clear that, for any $0<t<1$,
\begin{equation} \label{eq:C} \P\left( E_R(\rho) < 2\log d - \log s + \frac{1}{2\ln 2}\frac{s}{d^2} -t \right) \leq \P\left( -\log M(\rho) < 2\log d -\frac{t}{2} \right) + \P\left( S(\rho) > \log s - \frac{1}{2\ln 2}\frac{s}{d^2} + \frac{t}{2} \right) .\end{equation}
Indeed, defining the events $\mathcal{A}=``\,E_R(\rho)< 2\log d -\log s + 1/(2\ln 2)s/d^2 + t\,''$ and $\mathcal{B}_1=``\,-\log M(\rho) < 2\log d -t/2\,''$, $\mathcal{B}_2=``\,S(\rho) > \log s - 1/(2\ln 2)s/d^2 + t/2\,''$, we have $\neg\mathcal{B}_1\cap \neg\mathcal{B}_2\ \Rightarrow\ \neg\mathcal{A}$, i.e. equivalently $\mathcal{A}\ \Rightarrow\ \mathcal{B}_1\cup \mathcal{B}_2$.
Hence,
\[ \P(\mathcal{A}) \leq \P(\mathcal{B}_1\cup \mathcal{B}_2) \leq \P(\mathcal{B}_1) + \P(\mathcal{B}_2). \]
Now on the one hand, we know from Lemma \ref{lemma:S} that
\begin{equation} \label{eq:D} \P\left( S(\rho) > \log s - \frac{1}{2\ln 2}\frac{s}{d^2} + \frac{t}{2} \right) \leq e^{-cd^2st^2/\log^2d}. \end{equation}
And on the other hand, one may observe that
\[ \P\left( -\log M(\rho) < 2\log d -\frac{t}{2} \right) = \P\left( \sqrt{M(\rho)} > \frac{e^{t/4}}{d} \right) \leq \P\left( \sqrt{M(\rho)} > \frac{1+t/4}{d} \right). \]
So fix $0<\delta<1/8$ and consider $\mathcal{N}_{\delta}$ a $\delta$-net for $\|\cdot\|$ within $S_{\C^d}$. By a standard volumetric argument (see e.g. \cite[Ch.~4]{Pisier}) we know that we can impose $|\mathcal{N}_{\delta}|\leq (3/\delta)^{2d}$. Set next $M_{\delta}(\rho)=\max\left\{\bra{x\otimes y}\rho\ket{x\otimes y} \st x,y\in \mathcal{N}_{\delta}\right\}$. Then, let $x, y\in S_{\C^d}$ and $\bar{x},\bar{ y}\in\mathcal{N}_{\delta}$ be such that $u=x-\bar{x},v= y-\bar{ y}$ satisfy $\|u\|,\|v\|\leq \delta$, and observe that
\[ \bra{x\otimes y}\rho\ket{x\otimes y} = \bra{\bar{x}\otimes\bar{y}}\rho\ket{\bar{x}\otimes\bar{y}} + \bra{\bar{x}\otimes\bar{y}}\rho\ket{\bar{x}\otimes v} + \bra{\bar{x}\otimes v}\rho\ket{\bar{x}\otimes y} + \bra{\bar{x}\otimes y}\rho\ket{u\otimes y} + \bra{u\otimes y}\rho\ket{x\otimes y}. \]
Hence, $\bra{x\otimes y}\rho\ket{x\otimes y} \leq M_{\delta}(\rho) + 4\delta M(\rho)$, so that taking supremum over $x,y\in S_{\C^d}$ yields $M_{\delta}(\rho)\geq (1-4\delta)M(\rho)$. Yet, we know from Lemma \ref{lemma:norm} that, for fixed $x,y\in S_{\C^d}$,
\[\forall\ t>0,\ \P\left( \sqrt{\bra{x\otimes y}\rho\ket{x\otimes y}}> \frac{1+t/8}{d} \right) \leq e^{-c'st^2}. \]
We therefore get by the union bound that
\[ \P\left( \sqrt{M_{\delta}(\rho)} > \frac{1+t/8}{d} \right) \leq \left(\frac{3}{\delta}\right)^{4d}e^{-c'st^2}. \]
Consequently, we eventually obtain, choosing $\delta=t/10$, in order to have $(1+t/4)\sqrt{1-4\delta}\geq 1+t/4-5\delta/4 = 1+t/8$, that
\begin{equation} \label{eq:E} \P\left( \sqrt{M(\rho)} > \frac{1+t/4}{d} \right) \leq \P\left( \sqrt{M_{t/10}(\rho)} > \frac{1+t/8}{d} \right) \leq \left(\frac{30}{t}\right)^{4d}e^{-c'st^2}. \end{equation}
And whenever $s\geq C d\log(1/t)/t^2$, the r.h.s. of inequality \eqref{eq:E} above is smaller than $e^{-c''st^2}$.

So combining the deviation probabilities \eqref{eq:D} and \eqref{eq:E} with inequality \eqref{eq:C}, we get precisely the announced result, just observing that $e^{-c''st^2} + e^{-cd^2st^2/\log^2 d}\leq e^{-c_0st^2}$.
\end{proof}

\section{Dimension-dependent monogamy relations for the entanglement of formation and the regularized relative entropy of entanglement}

\begin{theorem} \label{th:monogamy-dimension}
For any state $\rho_{ABC}$ on $\mathcal{H}_A\otimes\mathcal{H}_B\otimes\mathcal{H}_C$, and any $\eta,\eta'>0$, we have, setting $d_{A,B}=\min(d_A,d_B)$ and $d_{A,C}=\min(d_A,d_C)$,
\[
E_F\left(\rho_{A:BC}\right)
\geq
\max
\left(
	E_F\left(\rho_{A:B}\right) + \frac{c}{d_Ad_C\log^{4+\eta} d_{A,C}}\big[E_F\left(\rho_{A:C}\right)\big]^{4+\eta},
	E_F\left(\rho_{A:C}\right) + \frac{c}{d_Ad_B\log^{4+\eta} d_{A,B}}\big[E_F\left(\rho_{A:B}\right)\big]^{4+\eta}
 \right),
 \]
\[ E_R^{\infty}\left(\rho_{A:BC}\right) \geq \max \left( E_R^{\infty}\left(\rho_{A:B}\right) + \frac{c'}{d_Ad_C\log^{2+\eta'}d_{A,C}}\big[E_R^{\infty}\left(\rho_{A:C}\right)\big]^{2+\eta'},  E_R^{\infty}\left(\rho_{A:C}\right) + \frac{c'}{d_Ad_B\log^{2+\eta'}d_{A,B}}\big[E_R^{\infty}\left(\rho_{A:B}\right)\big]^{2+\eta'} \right), \]
where $c,c'>0$ are constants depending on $\eta,\eta'$.
\end{theorem}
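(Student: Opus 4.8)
The plan is to prove the two displayed inequalities by a single common scheme, reducing each to a \emph{lower bound on the monogamy slack} $E(\rho_{A:BC})-E(\rho_{A:B})$ and then exploiting the permutation symmetry $B\leftrightarrow C$ to obtain the second entry of the $\max$ for free. Thus it suffices to establish, for $E\in\{E_F,E_R^\infty\}$,
\[ E(\rho_{A:BC})-E(\rho_{A:B}) \geq \frac{c}{d_Ad_C\,\log^{p}d_{A,C}}\big[E(\rho_{A:C})\big]^{p}, \]
with $p=4+\eta$ for $E_F$ and $p=2+\eta'$ for $E_R^\infty$. The guiding intuition is that retaining $C$ rather than discarding it must reveal, on top of whatever $A$:$B$ entanglement is already certified by $E(\rho_{A:B})$, the non-separability of $\rho_{AC}$ across $A$:$C$; the whole task is to turn this qualitative statement into the stated quantitative, dimension-dependent estimate.

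First I would convert ``$E(\rho_{A:C})$ large'' into ``$\rho_{AC}$ far from the separable set.'' Since every $E$ considered here vanishes on $\cS$ and is asymptotically continuous, the tight continuity bounds of \cite{winter_2015} (Alicki--Fannes--Winter) give $E(\rho_{A:C})\leq \varepsilon\log d_{A,C}+O(h(\varepsilon))$ whenever $\rho_{AC}$ is $\varepsilon$-close in trace norm to some separable state; inverting this yields a lower bound on $\mathrm{dist}_1(\rho_{AC},\cS(\C^{d_A}:\C^{d_C}))$ that is essentially linear in $E(\rho_{A:C})/\log d_{A,C}$ in the relevant regime. When $E(\rho_{A:C})$ is so small that the entropic correction dominates, the claimed penalty is negligible and the inequality degenerates to plain monotonicity, so nothing is lost there.

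Next I would pay the unavoidable data-hiding price: distinguishing $\rho_{AC}$ from $\cS$ using only \emph{separable} (or LOCC) measurements is weaker than with global measurements by a factor polynomial in the local dimensions, so by \cite{matthews_2009} the locally accessible bias is at least $\mathrm{dist}_1(\rho_{AC},\cS)/\mathrm{poly}(d_A,d_C)$, and this is exactly where the prefactor $1/(d_Ad_C)$ originates. The crux --- and the step I expect to be the main obstacle --- is then to bound the slack $E(\rho_{A:BC})-E(\rho_{A:B})$ from below by a power of this locally accessible $A$:$C$ bias, since naive monotonicity only delivers $\max(E_{A:B},E_{A:C})$ rather than their constrained sum. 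For $E_R^\infty$ I expect this to follow from the data-processing inequality for relative entropy applied to a measurement acting on $C$ that witnesses $A$:$C$ non-separability, together with the observation that the optimal $A$:$B$ separable state extends to an $A$:$BC$ separable state blind to this witness; Pinsker's inequality then contributes one squaring, yielding the exponent $2$ before the continuity loss inflates it to $2+\eta'$. For $E_F$ I would instead argue through a convex-roof / Koashi--Winter-type relation \cite{koashi_2004}: on a purification the slack equals a one-way correlation term, and passing back to mixed states by the convex roof, together with a further Pinsker-type comparison of $E_F$ with $E_R$ on $\rho_{AC}$, introduces a second square, accounting for the doubled exponent $4+\eta$ and the $\log^{4+\eta}$ factor.

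Finally, for the regularized measure $E_R^\infty$ the entire argument must be run on $\rho_{ABC}^{\otimes n}$, divided by $n$, and taken to the limit $n\to\infty$. Here I would invoke the asymptotic faithfulness and distinguishability results of \cite{piani_2009,brandao_2011}, which control the regularized relative entropy of entanglement by an asymptotic distinguishability from $\cS$ whose data-hiding factor does not degrade per copy; this is what keeps $d_A,d_C$ (rather than $d_A^n,d_C^n$) in the final bound. The principal technical difficulties I anticipate are therefore (i) establishing the additive slack bound with the correct single power of the local bias, and (ii) the bookkeeping that propagates the continuity loss, the data-hiding factor, and the Pinsker squarings into precisely the exponents $p$ and the $1/(d_Ad_C\log^p)$ prefactors while ensuring every estimate survives the regularization limit. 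The free parameters $\eta,\eta'>0$ should absorb the slack in the continuity and net estimates, which is exactly why the constants $c,c'$ are permitted to depend on them.
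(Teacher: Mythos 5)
Your proposal follows essentially the same route as the paper's proof: the key lemma lower-bounding the slack $E(\rho_{A:BC})-E(\rho_{A:B})$ by $E_{R,\mathrm{LOCC}^{\leftarrow}}(\rho_{A:C})$ (via Koashi--Winter and convexity for $E_F$, and via \cite{brandao_2011} for $E_R^\infty$), then Pinsker, the $\|\cdot\|_{\mathrm{LOCC}^{\leftarrow}}\geq c_0\|\cdot\|_1/\sqrt{d_Ad_C}$ comparison of \cite{matthews_2009}, and the asymptotic-continuity bounds of \cite{winter_2015}. The only slight misattribution is that the doubled exponent $4+\eta$ for $E_F$ comes from the $\|\rho-\sigma\|_1^{1/2-\eta_0}$ H\"older exponent in the continuity bound for $E_F$ (versus $1-\eta_0'$ for $E_R^\infty$), not from an extra Pinsker comparison of $E_F$ with $E_R$.
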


\begin{remark}
For concreteness, in the main text we choose $\eta=4$ and $\eta'=2$.
\end{remark}

To prove Theorem \ref{th:monogamy-dimension}, we will need as a starting point the two monogamy-like relations appearing in Lemma \ref{lemma:monogamy-like} below. In the latter, we denote by $E_{R,\mathrm{LOCC}^{\leftarrow}}$ the relative entropy of entanglement filtered by one-way LOCC measurements, and by $E_{R,\mathrm{LOCC}^{\leftarrow}}^{\infty}$ its regularized version. These were introduced and studied in \cite{piani_2009}, to which the reader is referred for more details.

\begin{lemma} \label{lemma:monogamy-like}
For any state $\rho_{ABC}$ on $\mathcal{H}_A\otimes\mathcal{H}_B\otimes\mathcal{H}_C$, we have
\begin{equation} \label{eq:E_F-E_R-mono-like_app} E_F\left(\rho_{A:BC}\right) \geq E_F\left(\rho_{A:B}\right) + E_{R,\mathrm{LOCC}^{\leftarrow}}\left(\rho_{A:C}\right)\ \text{and}\ E_R^{\infty}\left(\rho_{A:BC}\right) \geq E_R^{\infty}\left(\rho_{A:B}\right) + E_{R,\mathrm{LOCC}^{\leftarrow}}^{\infty}\left(\rho_{A:C}\right). \end{equation}
\end{lemma}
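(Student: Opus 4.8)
The plan is to prove Lemma \ref{lemma:monogamy-like} by tracking how the single party $A$ distributes its entanglement with $B$ and $C$, using the fact that the relative entropy of entanglement and the entanglement of formation both admit characterizations in terms of measurement-filtered quantities. The starting observation is that for a tripartite state, conditioning on a measurement applied to one subsystem leaves a classical-quantum structure that lower-bounds the global entanglement. Concretely, I would first recall from \cite{piani_2009} that $E_{R,\mathrm{LOCC}^{\leftarrow}}(\rho_{A:C})$ is defined via the relative entropy of entanglement after Alice (or Bob) applies a one-way LOCC measurement, and that this quantity is precisely engineered so as to interlock with a chain-rule-type inequality on the larger system. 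The key structural input is a \emph{monotonicity/chain} relation stating that, when we measure subsystem $B$ and then quantify the residual $A$-to-$C$ entanglement conditioned on that measurement, the average residual entanglement plus the $A$-to-$B$ entanglement cannot exceed the full $A$-to-$BC$ entanglement.

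The first main step is to establish the entanglement-of-formation inequality $E_F(\rho_{A:BC}) \geq E_F(\rho_{A:B}) + E_{R,\mathrm{LOCC}^{\leftarrow}}(\rho_{A:C})$. I would take an optimal pure-state decomposition realizing $E_F(\rho_{A:BC})$, and for each pure state in the decomposition decompose its bipartite entropy across the $A:BC$ cut into an $A:B$ contribution and a conditional $A:C$ contribution. The point is that averaging the $A:B$ part over the decomposition yields something at least $E_F(\rho_{A:B})$ by definition of the convex roof, while averaging the conditional $A:C$ part yields a measurement-filtered relative entropy of entanglement, which is exactly $E_{R,\mathrm{LOCC}^{\leftarrow}}(\rho_{A:C})$ once we identify the induced measurement on $A$ (the one implementing the decomposition) as a valid one-way LOCC operation. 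This is essentially the argument of \cite{koashi_2004} reinterpreted through the filtered relative entropy; the subtle part is matching the conditional entropy terms to the definition of $E_{R,\mathrm{LOCC}^{\leftarrow}}$ rather than to ordinary $E_R$, which is what makes the inequality nontrivial and tight enough to survive regularization.

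The second main step is the regularized relative-entropy inequality $E_R^{\infty}(\rho_{A:BC}) \geq E_R^{\infty}(\rho_{A:B}) + E_{R,\mathrm{LOCC}^{\leftarrow}}^{\infty}(\rho_{A:C})$. Here I would apply the single-copy relative-entropy analogue of the above on the state $\rho_{ABC}^{\otimes m}$, invoke superadditivity/monotonicity properties of the filtered relative entropy established in \cite{piani_2009}, and then divide by $m$ and send $m \to \infty$. The main obstacle I anticipate is precisely this regularization step: one must ensure that the single-letter inequality regularizes consistently on all three terms simultaneously, i.e.\ that the limit defining $E_{R,\mathrm{LOCC}^{\leftarrow}}^{\infty}$ on the right lines up with the limit defining $E_R^{\infty}(\rho_{A:BC})$ on the left without incurring a dimension-dependent correction that would spoil the bound. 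This requires that the chain-rule inequality hold \emph{at each finite block length} with no additive error, so that the $1/m$ scaling annihilates nothing essential; verifying this asymptotic compatibility, and confirming that the measurement-filtered quantity is the correct object whose regularization appears (rather than the full $E_R^{\infty}$, which would make the statement circular), is the technically delicate heart of the argument. Once Lemma \ref{lemma:monogamy-like} is in hand, Theorem \ref{th:monogamy-dimension} follows by lower-bounding $E_{R,\mathrm{LOCC}^{\leftarrow}}(\rho_{A:C})$ in terms of $E_F(\rho_{A:C})$ (resp.\ its regularized counterpart) via the dimension-dependent comparison inequalities of \cite{brandao_2011,matthews_2009,winter_2015}, which convert the filtered quantity into a polynomial power of the ordinary measure at the cost of the explicit $d_A d_C \log^{\cdot} d_{A,C}$ factors.
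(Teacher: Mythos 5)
Your argument for the $E_F$ inequality is essentially the paper's own: take an optimal pure-state decomposition realizing $E_F(\rho_{A:BC})$, apply the Koashi--Winter identity $E_F(\psi_{A:BC})=E_F(\rho_{A:B})+S_{\mathrm{LOCC}^{\leftarrow}}(\rho_{A:C}\|\rho_A\otimes\rho_C)$ to each member, and conclude by convexity of $E_F$ and joint convexity of the filtered relative entropy --- with the one correction that the averaged conditional term is not ``exactly'' $E_{R,\mathrm{LOCC}^{\leftarrow}}(\rho_{A:C})$ but only an upper bound on it, since $\sum_i p_i\,\rho_A^{i}\otimes\rho_C^{i}$ is merely one separable state competing in the infimum defining $E_{R,\mathrm{LOCC}^{\leftarrow}}$ (which is the direction needed, so the inequality survives). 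For the regularized relative-entropy inequality the paper proves nothing and simply cites Lemma~2 of \cite{brandao_2011}; your sketched block-regularization route is consistent with how that cited result is obtained, so there is no gap, but in the context of this paper that half is a citation rather than an argument.
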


\begin{proof}
The second inequality in \eqref{eq:E_F-E_R-mono-like_app} was proved in \cite[Lemma 2]{brandao_2011}. So let us turn to proving the first inequality in \eqref{eq:E_F-E_R-mono-like_app}. For this, assume that $\{p_i,\psi_{ABC}^{i}\}_i$ is such that $\rho_{ABC}=\sum_ip_i\psi_{ABC}^{i}$ and $E_F\left(\rho_{A:BC}\right)=\sum_i p_i E_F\left(\psi_{A:BC}^{i}\right)$. Now, for a pure state $\psi_{ABC}$, denoting by $\rho_{AB}$ and $\rho_{AC}$ its reduced states on $\mathcal{H}_A\otimes\mathcal{H}_B$ and $\mathcal{H}_A\otimes\mathcal{H}_C$ respectively, we know from \cite[Cor.~2]{koashi_2004},  that $E_F\left(\psi_{A:BC}\right)=E_F\left(\rho_{A:B}\right)+S_{\mathrm{LOCC}^{\leftarrow}}\left(\rho_{A:C}\|\rho_A\otimes\rho_C\right)$. Hence,
\[ E_F\left(\rho_{A:BC}\right)=\sum_ip_i\left[ E_F\left(\rho_{A:B}^{i}\right)+S_{\mathrm{LOCC}^{\leftarrow}}\left(\rho_{A:C}^{i}\|\rho_A^{i}\otimes\rho_C^{i}\right) \right] \geq E_F\left(\rho_{A:B}\right) + S_{\mathrm{LOCC}^{\leftarrow}}\left(\rho_{A:C}\Big\|\sum_ip_i\rho_A^{i}\otimes\rho_C^{i}\right), \]
where the inequality follows from the convexity of $E_F$ and the joint convexity of $S_{\mathrm{LOCC}^{\leftarrow}}$, combined with the observation that $\sum_ip_i\rho_{AB}^{i}=\rho_{AB}$ and $\sum_ip_i\rho_{AC}^{i}=\rho_{AC}$.  Now, the state $\sum_ip_i\rho_A^{i}\otimes\rho_C^{i}$ is obviously separable across the cut $\mathcal{H}_A:\mathcal{H}_C$, and therefore $S_{\mathrm{LOCC}^{\leftarrow}}\left(\rho_{A:C}\big\|\sum_ip_i\rho_A^{i}\otimes\rho_C^{i}\right)\geq E_{R,\mathrm{LOCC}^{\leftarrow}}\left(\rho_{A:C}\right)$, which yields precisely the advertised result.
\end{proof}

\begin{proof}[Proof of Theorem \ref{th:monogamy-dimension}] Since we can exchange the roles played by $\mathcal{H}_B$ and $\mathcal{H}_C$, we just have to show that
\begin{equation} \label{eq:E_F-mono'}  E_F\left(\rho_{A:BC}\right) \geq E_F\left(\rho_{A:B}\right) + \frac{c}{d_Ad_C\log^{4+\eta}\big[\min(d_A,d_C)\big]}\big[E_F\left(\rho_{A:C}\right)\big]^{4+\eta}, \end{equation}
\begin{equation} \label{eq:E_R-mono'} E_R^{\infty}\left(\rho_{A:BC}\right) \geq E_R^{\infty}\left(\rho_{A:B}\right) + \frac{c'}{d_Ad_C\log^{2+\eta'}\big[\min(d_A,d_C)\big]}\big[E_R^{\infty}\left(\rho_{A:C}\right)\big]^{2+\eta'}. \end{equation}
By Pinsker inequality and the Pinsker-type inequality established in \cite[Lemma 3]{brandao_2011}, we know that, for any state $\rho_{AC}$,
\[ E_{R,\mathrm{LOCC}^{\leftarrow}}\left(\rho_{A:C}\right)\geq\frac{1}{2\ln 2}\left\|\rho_{AC} - \mathcal{S}(\mathcal{H}_A:\mathcal{H}_C)\right\|_{\mathrm{LOCC}^{\leftarrow}}^2\ \text{and}\ E_{R,\mathrm{LOCC}^{\leftarrow}}^{\infty}\left(\rho_{A:C}\right)\geq\frac{1}{8\ln 2}\left\|\rho_{AC} - \mathcal{S}(\mathcal{H}_A:\mathcal{H}_C)\right\|_{\mathrm{LOCC}^{\leftarrow}}^2. \]
Now, we also know from \cite{matthews_2009} that, for any states $\rho_{AC}$ and $\sigma_{AC}$,
\[ \|\rho_{AC}-\sigma_{AC}\|_{\mathrm{LOCC}^{\leftarrow}} \geq \frac{c_0}{\sqrt{d_Ad_C}}\|\rho_{AC}-\sigma_{AC}\|_{1}. \]
And finally, it was shown in \cite{winter_2015}, Corollaries 4 and 7, that for any states $\rho_{AC}$ and $\sigma_{AC}$, and any $\eta_0,\eta'_0>0$
\[ | E_F(\rho_{AC}) - E_F(\sigma_{AC}) | \leq C \log\big[\min(d_A,d_C)\big]\|\rho_{AC}-\sigma_{AC}\|_{1}^{1/2-\eta_0}, \]
\[ | E_R^{\infty}(\rho_{AC}) - E_R^{\infty}(\sigma_{AC}) | \leq C' \log\big[\min(d_A,d_C)\big]\|\rho_{AC}-\sigma_{AC}\|_{1}^{1-\eta'_0}. \]
So putting everything together, we get that, for any state $\rho_{AC}$,
\[ E_{R,\mathrm{LOCC}^{\leftarrow}}\left(\rho_{A:C}\right)\geq \frac{c}{d_Ad_C\log^{4+\eta}\big[\min(d_A,d_C)\big]}\big[E_F\left(\rho_{A:C}\right)\big]^{4+\eta}, \]
\[ E_{R,\mathrm{LOCC}^{\leftarrow}}^{\infty}\left(\rho_{A:C}\right)\geq \frac{c'}{d_Ad_C\log^{2+\eta'}\big[\min(d_A,d_C)\big]}\big[E_R^{\infty}\left(\rho_{A:C}\right)\big]^{2+\eta'}. \]
And combining these two lower-bounds with Lemma \ref{lemma:monogamy-like} yields, as wanted, the two inequalities \eqref{eq:E_F-mono'} and \eqref{eq:E_R-mono'}.
\end{proof}


\end{document}